\newcommand{\doublewidetilde}[1]{{%
  \mathpalette\double@widetilde{#1}%
}}
\newcommand{\double@widetilde}[2]{%
  \sbox\z@{$\m@th#1\widetilde{#2}$}%
  \ht\z@=.9\ht\z@
  \widetilde{\box\z@}%
}
\newcommand{\myhash}{%
	{\settoheight{\dimen0}{C}\kern-.05em\, \resizebox{!}{\dimen0}{\raisebox{\depth}{\#}}}}
\newcommand{\Sigmah}{{\Sigmam}_{\hv}}
\DeclareMathAlphabet\mathbfcal{OMS}{cmsy}{b}{n}
\def\rect{{\ttr\tte\ttc\ttt}}
\def\wh{\widehat}
\def\mindex#1{\index{#1}}
\def\sq{\hbox{\rlap{$\sqcap$}$\sqcup$}}
\def\qed{\ifmmode\sq\else{\unskip\nobreak\hfil
\penalty50\hskip1em\null\nobreak\hfil\sq
\parfillskip=0pt\finalhyphendemerits=0\endgraf}\fi\medskip}
\long\def\defbox#1{\framebox[.9\hsize][c]{\parbox{.85\hsize}{%
\parindent=0pt
\baselineskip=12pt plus .1pt      
\parskip=6pt plus 1.5pt minus 1pt 
 #1}}}
\long\def\beginbox#1\endbox{\subsection*{}%
\hbox{\hspace{.05\hsize}\defbox{\medskip#1\bigskip}}%
\subsection*{}}
\def\endbox{}
\def\diag{{\text{diag}}}
\newsavebox{\junk}
\savebox{\junk}[1.6mm]{\hbox{$|\!|\!|$}}
\def\det{{\mathop{\rm det}}}
\def\argmin{\mathop{\rm arg\, min}}
\def\argmax{\mathop{\rm arg\, max}}
\def\bC{{\mathbb C}}
\def\bR{{\mathbb R}}
\def\bfA{{\bf A}}
\def\bfI{{\bf I}}
\def\bfS{{\bf S}}
\def\bfU{{\bf U}}
\def\bfa{{\bf a}}
\def\bff{{\bf f}}
\def\bfh{{\bf h}}
\def\bfu{{\bf u}}
\def\bfy{{\bf y}}
\def\scrC{{\mathscr{C}}}
\def\ttc{{\mathtt c}}
\def\tte{{\mathtt e}}
\def\ttr{{\mathtt r}}
\def\ttt{{\mathtt t}}
\def\sfF{{\sf F}}
\def\bfmath#1{{\mathchoice{\mbox{\boldmath$#1$}}%
{\mbox{\boldmath$#1$}}%
{\mbox{\boldmath$\scriptstyle#1$}}%
{\mbox{\boldmath$\scriptscriptstyle#1$}}}}
\def\bfmY{\bfmath{Y}}
\def\bfmhhaY{\bfmath{\hhaY}} 
\def\bfmhhaY{\hbox to 0pt{$\widehat{\bfmY}$\hss}\widehat{\phantom{\raise 1.25pt\hbox{$\bfmY$}}}}
\def\til={{\widetilde =}}
\def\clC{{\cal C}}
\def\clG{{\cal G}}
\def\clN{{\cal N}}
 \def\FRAC#1#2#3{\genfrac{}{}{}{#1}{#2}{#3}}
\def\ddtp{{\mathchoice{\FRAC{1}{d^{\hbox to 2pt{\rm\tiny +\hss}}}{dt}}%
{\FRAC{1}{d^{\hbox to 2pt{\rm\tiny +\hss}}}{dt}}%
{\FRAC{3}{d^{\hbox to 2pt{\rm\tiny +\hss}}}{dt}}%
{\FRAC{3}{d^{\hbox to 2pt{\rm\tiny +\hss}}}{dt}}}}
\def\average#1,#2,{{1\over #2} \sum_{#1}^{#2}}
\def\eye(#1){{\bf(#1)}\quad}
\newtheorem{theorem}{{\bf Theorem}}
\newtheorem{remark}{{\bf Remark}}
\newtheorem{lemma}[theorem]{{\bf Lemma}}
\def\eq#1/{(\ref{e:#1})}
\newcommand{\beqn}[1]{\notes{#1}%
\begin{eqnarray} \elabel{#1}}
\newcommand{\eeqn}{\end{eqnarray} }
\newcommand{\beq}[1]{\notes{#1}%
\begin{equation}\elabel{#1}}
\newcommand{\eeq}{\end{equation}}
\def\bdes{\begin{description}}
\def\edes{\end{description}}
\newcounter{rmnum}
\newcounter{anum}
\def\ass(#1:#2){(#1\ref{#1:#2})}
\def\ritem#1{
\item[{\sf \ass(\current_model:#1)}]
}
\newenvironment{recall-ass}[1]{%
\begin{description}
\def\current_model{#1}}{
\end{description}
}
\newcommand{\SigmaS}{\Sigmam_{\hv} (\uv)}
\def\cg{{\clC\clN}} 
\long\def\comment#1{}
\newfont{\bb}{msbm10 scaled 1100}
\newcommand{\CC}{\mbox{\bb C}}
\newcommand{\RR}{\mbox{\bb R}}
\newcommand{\EE}{\mbox{\bb E}}
\newcommand{\av}{{\bf a}}
\newcommand{\bv}{{\bf b}}
\newcommand{\cv}{{\bf c}}
\newcommand{\fv}{{\bf f}}
\newcommand{\hv}{{\bf h}}
\newcommand{\uv}{{\bf u}}
\newcommand{\xv}{{\bf x}}
\newcommand{\yv}{{\bf y}}
\newcommand{\zv}{{\bf z}}
\newcommand{\Am}{{\bf A}}
\newcommand{\Dm}{{\bf D}}
\newcommand{\Id}{{\bf I}}
\newcommand{\Sm}{{\bf S}}
\newcommand{\Tm}{{\bf T}}
\newcommand{\Um}{{\bf U}}
\newcommand{\Wm}{{\bf W}}
\newcommand{\Xm}{{\bf X}}
\newcommand{\Ym}{{\bf Y}}
\newcommand{\Zm}{{\bf Z}}
\newcommand{\Ac}{{\cal A}}
\newcommand{\Lc}{{\cal L}}
\newcommand{\thetav}{\hbox{\boldmath$\theta$}}
\newcommand{\sigmav}{\hbox{\boldmath$\sigma$}}
\newcommand{\Lambdam}{\hbox{\boldmath$\Lambda$}}
\newcommand{\Deltam}{\boldsymbol{\Delta}}
\newcommand{\Sigmam}{\hbox{\boldmath$\Sigma$}}
\renewcommand{\det}{{\hbox{det}}}
\newcommand{\trace}{{\hbox{tr}}}
\renewcommand{\arg}{{\hbox{arg}}}
\newcommand{\herm}{{\sf H}}
\newcommand{\trasp}{{\sf T}}
\newcommand{\transp}{{\sf T}}
\renewcommand{\vec}{{\rm vec}}
\title{Structured Channel Covariance Estimation from Limited Samples for Large Antenna Arrays}
\author{Tianyu Yang$^1$, Mahdi Barzegar Khalilsarai$^1$, Saeid Haghighatshoar$^2$,  \\and Giuseppe Caire$^1$  
\thanks{$^1$Communications and Information Theory Group (CommIT), Technische Universit\"{a}t Berlin, 10587 Berlin, Germany (e-mail: \{tianyu.yang, m.barzegarkhalilsarai, caire\}@tu-berlin.de).}
\thanks{$^2$Saeid Haghighatshoar was with the CommIT Group, Technische Universit\"{a}t Berlin, 10587 Berlin, Germany. He is now with Swiss Center for Electronics and Mirotechnology (CSEM), 2002 Neuch\~{a}tel, Switzerland (e-mail: saeid.haghighatshoar@csem.ch).}
\thanks{Part of this work has been presented in IEEE International Conference on Communications (ICC) 2020 \cite{khalilsarai2020structured}.}
}
\begin{document}

\maketitle

\def\ful{f_\text{ul}}
\def\fdl{f_\text{dl}}
\def\asfc{\scrC}
\def\asful{\scrC_\text{ul}}
\def\asfdl{\scrC_\text{dl}}

\vspace{-1cm} 

\begin{abstract}
In massive multiple-input multiple-output (MIMO) systems, the knowledge of the users' channel covariance matrix is crucial for minimum mean square error (MMSE) channel estimation in the uplink as well as it plays an important role in several multiuser beamforming schemes in the downlink. 
Due to the large number of base station antennas in massive MIMO, accurate covariance estimation is challenging especially in the case where the number of samples is limited and thus comparable to the channel vector dimension. As a result, the standard sample covariance estimator may yield 
a too large estimation error which in turn may yield significant system performance degradation with respect to the ideal channel covariance knowledge case. 
To address such problem, we propose a method based on a parametric representation of the channel angular scattering function. The proposed parametric representation includes a discrete specular component which is addressed using the well-known MUltiple SIgnal Classification (MUSIC) method, and a diffuse scattering component, which is modeled as the superposition of suitable dictionary functions. To obtain the representation parameters we propose two methods, where the first solves a non-negative least-squares problem and the second maximizes the likelihood function using expectation-maximization. 
Our simulation results show that the proposed methods outperform the state of the art with respect to various estimation quality metrics and different sample sizes.  
\end{abstract}

\begin{keywords}
Massive MIMO, covariance estimation, MUSIC, Maximum-Likelihood, Non-Negative Least Squares, Angular Scattering Function.
\end{keywords}

\section{Introduction} \label{sec:intro}

Massive multiple-input multiple-output (MIMO) communication system, where the number of base station (BS) antennas $M$ is much larger than the number of single antenna users, has been shown to achieve high spectral efficiency in wireless cellular networks and enjoys various system level benefit, such as energy efficiency, inter-cell interference reduction, and dramatic simplification of user scheduling (e.g., see \cite{boccardi2014five,Larsson-book}). In a large number of papers on the subject, the knowledge of the uplink (UL) and downlink (DL) channel covariance matrix, i.e., of the correlation structure of the channel antenna coefficients at the BS array, is assumed and used for a variety of purposes, such as minimum mean square error (MMSE) UL channel estimation and pilot decontamination \cite{chen2016pilot,yin2016robust,haghighatshoar2017decontamination}, efficient DL multiuser precoding/beamforming design, especially in the frequency division duplexing (FDD) case \cite{khalilsarai2018fdd, boroujerdi2018low,haghighatshoar2018low, haghighatshoar2017massive, adhikary2013joint}, statistical channel state information (CSI) based transmission design and statistical beamforming \cite{qiu2017downlink, you2020spectral,liu2020statistical}.    

Under the usual assumption of wide-sense stationary (WSS) uncorrelated scattering (US) \cite{raghavan2010sublinear,haghighatshoar2018low,haghighatshoar2016massive,haghighatshoar2017massive,khalilsarai2018fdd}\footnote{The channel models in 3GPP standard TR 38.901 Subclause 7.5 equation 7.5-22 \cite{3gpp38901} and TR 25.996 equation 5.4-1 \cite{3gpp25996} specify the US property. Section 3 of the QuaDRiGa documentation \cite{jaeckel2014quadriga} also implicitly includes this assumption.}, the channel vector evolves over time as a vector-valued WSS process and its spatial correlation is frequency-invariant over 
a frequency interval significantly larger than the signal bandwidth, although much smaller than the carrier frequency.
In particular, in an orthogonal frequency-division multiplexing (OFDM) system, the channel spatial covariance is independent of time (OFDM symbol index) and frequency (subcarrier index). In order to capture the actual WSS statistics, samples sufficiently spaced over time and frequency must be collected (e.g., one sample for every resource block in the UL slots over which a given user is active). On the other hand, the WSS model holds only {\em locally}, over time intervals where the propagation geometry (angle and distances of multipath components) does not change significantly. Such time interval, often referred to as ``geometry coherence time'', is several orders of magnitude larger than the coherence time of the small-scale channel coefficients. For a typical mobile urban environment, the channel geometry coherence time is of the order of seconds, while the small-scale fading coherence time is of the order of milliseconds (see \cite{va2016impact} and references therein).\footnote{See also discussion in \cite[Section I]{haghighatshoar2016massive}, and \cite{mahler2015propagation} for a recent study based on data gathered from a channel sounder in an urban environment with a receiver moving at vehicular speed.} Hence, the BS can collect a window of tens-to-hundreds of noisy channel snapshots from UL pilot symbols sent by any given user,  and use these samples to produce a ``local'' estimate of the corresponding user covariance matrix which remains valid during a channel geometry coherence time. Such estimation must be repeated, or updated, at a rate that depends on the propagation scenario and user-BS relative motion. This discussion points out that the number of samples $N$ available for covariance estimation is limited and often comparable or even less than the number of BS antennas $M$. Therefore, the accurate estimation of the high-dimensional $M \times M$ spatial channel covariance matrix (both in the UL and in the DL) from a limited number of noisy samples is generally a difficult task. 

The simplest way to estimate the channel covariance matrices is the sample covariance estimator. Such estimation is asymptotically unbiased and consistent and works well when $N \gg M$. Unfortunately, as already noticed above, this is typically not the case in massive MIMO. Hence, the goal of this paper is to devise new parametric estimators that outperform the sample covariance estimator, as well as the other state-of-the-art methods proposed in the literature. In addition, an attractive feature of the proposed parametric estimation is that it lends itself to the extrapolation of the estimated channel covariance matrix from the UL to the DL frequency band. As the channel samples are collected by the BS through pilot symbols sent by the users on the UL, the UL channel covariance can be directly estimated via our method.  However, as pointed out above, several schemes for DL multiuser precoding/beamforming in FDD systems make use of the user channel covariance matrix in the DL, which differs from the UL covariance since the frequency separation between the UL and the DL bands is large. The estimation of the DL covariance from UL channel samples has been considered in several works and it is generally another challenging task \cite{khalilsarai2018fdd,miretti2018fdd,khalilsarai2019uplink,song2020deep,gonzalez2020fdd,decurninge2015channel}. We shall see that our scheme is able to accurately estimate the DL channel covariance by extrapolating (over frequency) the estimated parametric model in the UL. 

\subsection{Related Work}

Covariance estimation from limited samples is a very well-known classical problem in statistics. In the ``large system regime'' when both the vector dimension $M$ and the number of samples $N$ grow large with fixed sampling ratio $N/M$ of samples per dimension, a vast literature focused on the asymptotic eigenvalue distribution of sample covariance matrices estimator under specific statistical assumptions (see, e.g., \cite{marchenko1967distribution, hachem2005empirical, couillet2011random} and the references therein). Covariance estimation has reemerged recently in many problems in machine learning, compressed sensing, biology, etc. (see, e.g., \cite{pourahmadi2013high, ravikumar2011high, chen2011robust, friedman2008sparse} from some recent results).  What makes these recent works different from the classical ones is the highly-structured nature of the covariance matrices in these applications. A key challenge in these new applications is to design efficient estimation algorithms able to take advantage of the underlying structure to recover the covariance matrix with a small sample size. 

As we explain in the sequel, MIMO covariance matrices are also a  class of highly-structured covariance matrices due to the particular spatial configuration of the BS antennas and wireless propagation scenario. So, it is very important to design efficient algorithms that are able to take advantage of this specific structure. In this paper, we shall compare our proposed method with alternative approaches which can be regarded as the state-of-the-art for the specific case of wireless massive MIMO channels. The competitor methods shall be briefly discussed when presenting such comparison results.

\subsection{Contributions}

The main contributions of this work are listed below:
\begin{enumerate}

    \item Previous works have considered either {\em diffuse scattering} (the received power from any given angle direction is infinitesimal \cite{miretti2018fdd}), or separable discrete components \cite{stoica2011spice,xie2018channel,park2018spatial}. Interestingly, both classes of state of the art methods incur significant degradation when the actual channel model does not conform to the assumptions. In contrast, we propose a method that handles the more realistic 
    {\em mixed scattering model} where discrete and diffuse scattering are simultaneously present. This corresponds to a so-called ``spiked model'' 
    in the language of asymptotic random matrix theory (e.g., see \cite{couillet2011random}), where the spikes are the discrete scattering components.
    It should also be noticed that our model is not just ``one or many possible alternatives''. In fact, our model encompasses any possible received power distribution over the angle domain, referred to in this work as the {\em angular scattering function} (ASF).
    
   \item The new idea of the proposed general model consists of approximating the ASF in terms of atoms of a dictionary. This dictionary is formed by some 
   Dirac delta functions placed at the angle-of-arrival (AoA) of the discrete specular components and by a family of functions obtained by shifts of a template density function over the angle domain, in order to approximate the diffuse component of the ASF. Notice that the proposed dictionary is partially ``on the gird'' (the shifts of the template density function at integer multiples of some chosen discretization interval) and partially ``off the grid'' (the Diracs placed at the unquantized AoAs). In this sense, the method differs substantially from standard {\em compressed sensing} schemes, and does not assume sparsity of the ASF in the angle domain. In fact, the diffuse scattering component is not sparse at all, but is dense on given angular spread intervals. 
   
   \item In order to estimate the AoA of the discrete components and their number (model order), we propose to use MUltiple SIgnal Classification (MUSIC) \cite{stoica2005spectral}. This method is particularly suited to the problem at hand thanks to the provable {\em asymptotic consistency} for the corresponding spiked model. Notice that this property has not been proven for other super-resolution methods based on convexity and atomic norm, which have also the disadvantage of being significantly more computationally complex (see e.g., \cite[Table III]{chen2020new}) and numerically unstable.
    
   \item After having the estimated AoAs of the spikes via MUSIC, we propose two estimators, namely a constrained Least-Squares estimator and Maximum-Likelihood (ML) estimator, to obtain the parametric model coefficients. In particular, the maximization of the likelihood function in the ML estimator is obtained  via the expectation maximization (EM) initialized by the constrained Least-Squares solution. 
    
    \item We present several numerical results based on realistic channel emulator QUAsi Deterministic RadIo channel GenerAtor (QuaDriGa) \cite{jaeckel2014quadriga} with comparison with other competitor schemes. Our comparison shows that the proposed algorithms outperform 
    the competitor schemes over all benchmarks in a wide range of sampling ratio $N/M$. It should be noticed that the QuaDriGa channel generator is 
    agnostic and unaware of the proposed ASF representation and generates channel snapshots on the basis of a mixed physical/statistical model. Therefore, our results are not ``biased'' by generating channels that are matched to the assumed model. On the contrary, our results show that the proposed approach  provides improvements on the state of the art even if the underlying channel statistics do not reflect exactly the assumed model.
\end{enumerate}

\subsection{Used Notations}
An identity matrix with $K$ columns is denoted as ${\mathbf{I}}_K$.  An all-zero matrix with size $m \times n$ is denoted as $\mathbf{0}_{m \times n}$.  $\text{diag}(\cdot)$ returns a diagonal matrix.  $\|\cdot\|_{\sfF}$ returns Frobenius norm of a matrix. $\delta(\cdot)$ denotes the Dirac delta function.  We use $[n]$ to denote the ordered integer set $\{1, 2, \ldots, n\}$. 

\section{System Model}\label{sec:model}
\begin{figure}[t]
    \centering
    \includegraphics[width=0.5\textwidth]{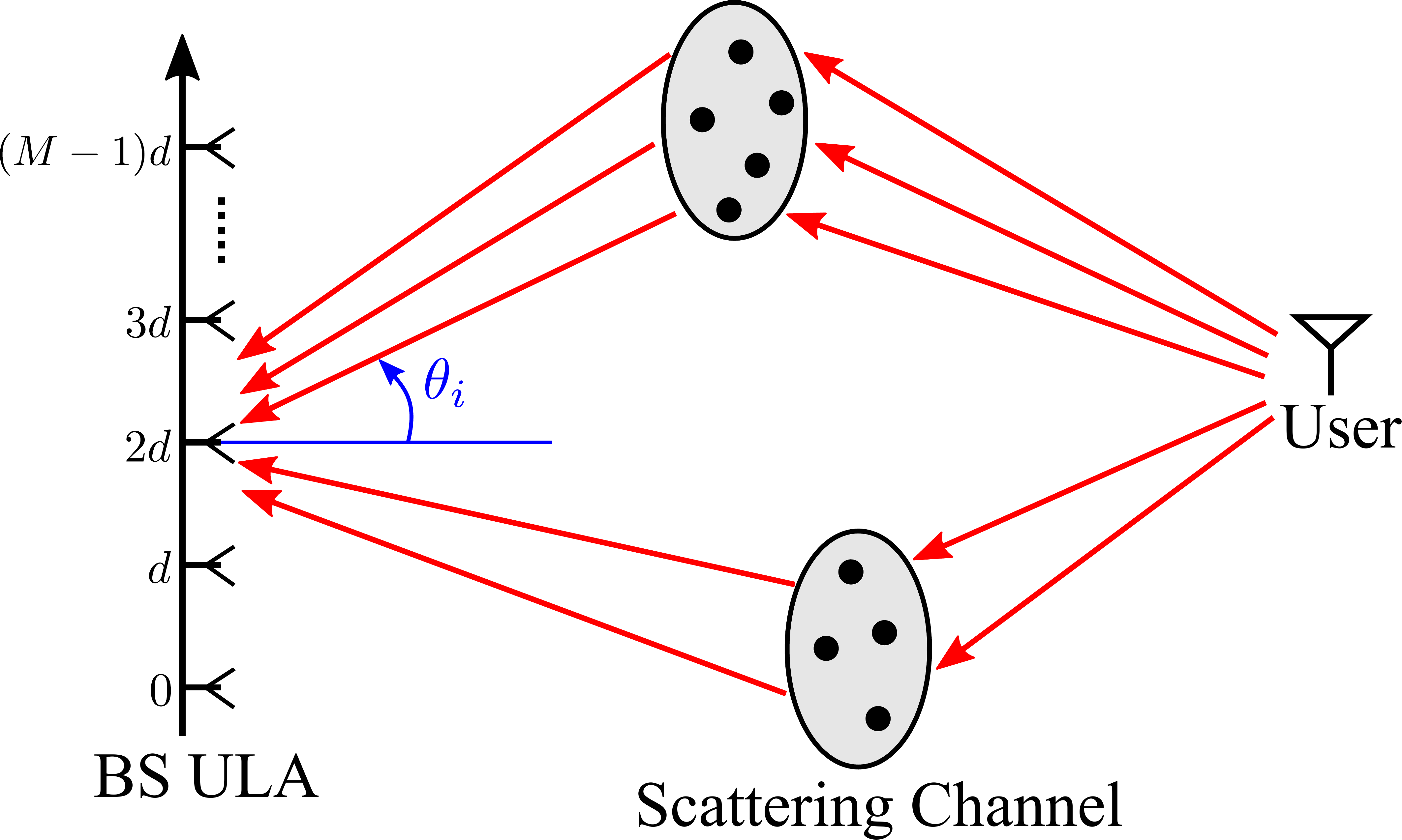}
    \caption{A cartoonish representation of a multipath propagation channel, where the user signal is received at the BS through two scattering clusters.}
    \label{fig:systemmodel}
\end{figure}

We consider a typical single-cell massive MIMO communication system, where a BS equipped with a uniform linear array (ULA) of $M$ antennas  communicates with multiple users through a scattering channel.\footnote{Notice that in the case of a multi-cell system with pilot contamination the same techniques of this paper can estimate the covariance of the sum of multiple channels, and a pilot decontamination scheme as described in \cite{haghighatshoar2017decontamination} can be applied. However, this goes beyond the scope of this paper.} Fig. \ref{fig:systemmodel} visualizes the propagation model based on multipath clusters, which is physically motivated and widely adopted in standard channel simulation tools such as QuaDriGa \cite{jaeckel2014quadriga}. During UL transmission, on each time-frequency resource block (RB) $s$ the BS receives an UL user pilot carrying a measurement for the channel vector $\hv[s]$. We assume that the window of $N$ samples collected for covariance estimation is designed such that the samples are enough spaced in the time-frequency domain and resulting in statistically independent channel snapshots $\{\hv[s]: s = [N]\}$. Meanwhile, the whole window spans a time significantly shorter than the geometry coherence time so that the WSS assumption holds (see discussion in Section \ref{sec:intro}) and the channel snapshots are identically distributed.  The channel vectors are given by \cite{sayeed2002deconstructing}
\begin{equation}\label{eq:ch_expression}
\hv [s] = \int_{-1}^1 \rho (\xi;s) \bfa(\xi) d \xi,~s\in [N],
\end{equation}
where $\rho(\xi;s)$ is the channel complex coefficient at the normalized 
AoA $\xi = \frac{\sin(\theta)}{\sin(\theta_{\text{max}})} \in [-1,1)$, where $\theta_{\text{max}} \in [0,\frac{\pi}{2}]$ is the maximum array angular aperture. $\bfa(\xi) \in \bC^M$ denotes the array response vector at $M$ BS antennas over $\xi$, whose $m$-th element is given as $[\av(\xi)]_m=e^{j\frac{2\pi d}{\lambda_0}m \xi\sin(\theta_{\text{max}})}$, where $d$ denotes the antenna spacing and $\lambda_0$ denotes the carrier wavelength. For convenience we assume the antenna spacing to be $d = \frac{\lambda_0}{2\sin(\theta_{\text{max}})}$. Thus, the array response vector is given as
\begin{equation}\label{eq:a_vec}
    \av (\xi)  = \left[1, e^{j\pi\xi},\dots, e^{j\pi(M-1)\xi}\right]^{\trasp}.
\end{equation}
The channel coefficient $\rho (\xi;s)$ models the small-scale multipath fading component at a given AoA and it is 
modeled as a complex circularly symmetric Gaussian process with respect to the continuous variable $\xi$. 
Due to the WSS property, the channel second-order statistics are invariant with respect to the index $s \in [N]$. 
In particular, $\rho (\xi;s)$ has mean zero and variance 
$\EE\left[\rho \left(\xi;s\right)\rho^\ast \left(\xi;s\right)\right] =  \gamma\left(\xi\right)$. 
The function $\gamma : [-1,1]\to  \mathbb{R}_+$ is a real non-negative measure that describes how the channel energy is distributed across the angle domain. 
We refer to $\gamma(\xi)$ as the channel ASF.  As a result, the channel spatial covariance matrix, describing the correlation of the channel coefficients at the different antenna elements, is given by  
\begin{align}\label{eq:cov_mat}
\Sigmam_\bfh=\EE \left[\bfh[s]\bfh[s]^\herm\right]=\int_{-1}^1 \gamma(\xi) \bfa(\xi) \bfa(\xi)^\herm d \xi.
\end{align} 
Notice that $\Sigmam_\bfh$ is Toeplitz. This fact is verified when all the scattering clusters (see Fig.~\ref{fig:systemmodel}) are in the far field of the BS array. 
As a note of caution, we hasten to say that the model must be reconsidered in the case of ``extra-large aperture'' arrays (e.g., see \cite{de2020non}), where the Toeplitz form does not hold any longer.
At RB $s$, the received pilot signal at the BS is given as 
\begin{equation}
    \yv[s] = \hv[s] x[s] + \zv[s],~s\in[N],   
\end{equation}
where $x[s]$ is the pilot symbol and $\zv[n] \sim \cg (\mathbf{0},N_0 \mathbf{I}_M)$ is the additive white Gaussian noise (AWGN). Without loss of generality, we assume that the pilot symbols are normalized as $x[n]=1, \; \forall\, s \in [N]$. The goal of this work is to estimate the channel covariance matrix $\Sigmam_\bfh$ with the given set of $N$ noisy channel observations $\{ \yv[s]: s\in [N]\}$. 

\subsection{Sample Covariance Matrix}

We start by reviewing the sample covariance estimator. For known noise power $N_0$ at the BS, the sample covariance matrix is given by
\begin{equation}\label{eq:sample_cov}
\widehat{\Sigmam}_\hv = \frac{1}{N} \sum_{s=1}^{N} \yv[s] \yv[s]^\herm - N_0 \mathbf{I}_M.
\end{equation}
This is a consistent estimator, in the sense that it converges to the true covariance matrix as $N \rightarrow \infty$ \cite[Section 1.2.2]{wainwright2019high}. 
The mean square (Frobenius-norm) error incurred by the sample covariance estimator is given as \cite{wainwright2019high}
$ \EE \left[\left\|\widehat{\Sigmam}_\hv-\Sigmam_\hv\right\|^2_{\sfF}\right] = \frac{\trace\left(\Sigmam_\bfh\right)^2}{N}$. 
By applying the Cauchy-Schwarz inequality to the singular values of $\Sigmam_\hv$, it is seen that $\trace (\Sigmam_\hv) \leq \|\Sigmam_\hv\|_{\sfF}\sqrt{\text{rank}(\Sigmam_\hv)}$, which together with the estimation error expression yields the upper bound to the normalized mean squared error $\EE \left[\frac{\|\widehat{\Sigmam}_\hv-\Sigmam_\hv\|^2_{\sfF}}{\|\Sigmam_\hv\|^2_{\sfF}}\right] \leq \frac{\text{rank}(\Sigmam_\bfh)}{N}$.
As already discussed, a relevant and interesting regime for massive MIMO is when $N$ and $M$ are of the same order. 
From the above analysis, it is clear that the sample covariance estimator yields a small error if $\text{rank}(\Sigmam_\bfh) \ll N$. For example, if the scattering contains only a finite number of discrete components (e.g., as assumed in \cite{stoica2011spice,park2018spatial,xie2018channel}), $\text{rank}(\Sigmam_\bfh)$ is small even if $M$ is very large. 
In contrast, if $\gamma(\xi)$ contains a diffuse scattering component, i.e., if its cumulative distribution function $\Gamma(\xi) = \int_{-1}^{\xi} \gamma(\nu) d\nu$ is piecewise continuous with strictly monotonically increasing segments,  then $\text{rank}(\Sigmam_\bfh)$ increases linearly with $M$ 
(see \cite{adhikary2013joint}) and  the error incurred by the sample covariance estimator can be large. On the other hand, the presence of discrete scattering components implies that $\gamma(\xi)$ contains Dirac delta functions (spikes) and therefore it is not squared-integrable. 
This poses significant problems for estimation methods that assume $\gamma(\xi)$ to be an element in a Hilbert space of functions (e.g., the method proposed in \cite{miretti2018fdd}). The challenge tackled in this work is to handle both the small sample regime $N/M \leq 1$ and the presence 
of both discrete and diffuse scattering. 

\subsection{Structure of the Channel Covariance Matrix}

As said, the ASF describes how the received signal power is distributed over the AoA domain. 
The signal from the UE to the BS array propagates through a given scattering environment. The line-of-sight (LoS) path (if present), 
specular reflections, and wedge diffraction occupy extremely narrow angular intervals.
This is usually modeled in a large number of papers as the superposition of discrete separable angular components coming at normalized AoAs $\{\phi_i\}$. In particular, it is assumed that the general form (\ref{eq:ch_expression}) reduces to the discrete sum of $r$ paths $\hv[s] = \sum_{i=1}^r \rho_i[s] \av(\phi_i)$, with corresponding ASF 
$\gamma(\xi) = \sum_{i=1}^r c_i \delta(\xi - \phi_i)$ and covariance matrix $\Sigmam_{\hv} = \sum_{i=1}^r c_i \av(\phi_i) \av(\phi_i)^\herm$, where $c_i = \EE[|\rho_i[s]|^2]$.  However, it is well-known from channel sounding observations (e.g., see \cite{hanssens2018modeling}) 
and widely treated theoretically (e.g., see \cite{sayeed2002deconstructing}) that {\em diffuse scattering} is typically also present,  
and may carry a very significant part of the received signal power especially at frequencies below 6GHz. 
In this case, scattering clusters span continuous intervals over the AoA domain. 
In order to encompass full generality,  we model the ASF  $\gamma(\xi)$ as a {\em mixed-type} distribution \cite[Section 5.3]{gubner2006probability}
including discrete and diffuse scattering components:  
\begin{equation}\label{eq:gamma_decomp}
    \gamma (\xi) = \gamma_d (\xi) + \gamma_c (\xi) = \sum_{i=1}^{r} c_i \delta (\xi - \phi_i) \, +\, \gamma_c (\xi), 
\end{equation}
where $\gamma_d(\xi)$ models the power received from $r \ll M$ discrete paths and $\gamma_c(\xi)$ models the power coming from diffuse scattering clusters. 
Since the ASF can be seen as a (generalized) density function, we borrow the language of discrete and continuous random variables and refer to  $\gamma_d(\xi)$ and to $\gamma_c(\xi)$ as the {\em discrete} and the {\em continuous} parts of the ASF, respectively.\footnote{As for probability density functions,  $\gamma_c(\xi)$ needs not be continuous, but its cumulative distribution function $\Gamma_c(\xi) = \int_{-1}^\xi \gamma_c(\nu) d\nu$ is a continuous function.} 
Plugging \eqref{eq:gamma_decomp} into \eqref{eq:cov_mat} we obtain a corresponding  decomposition of the channel covariance matrix as
\begin{equation}\label{eq:channel_cov}
\begin{aligned}
\Sigmah =  \Sigmah^d + \Sigmah^c &=  \sum_{i=1}^{r} c_i \av (\phi_i) \av (\phi_i)^\herm
 + \int_{-1}^1 \gamma_c (\xi) \av (\xi) \av (\xi)^\herm d \xi.
\end{aligned}
\end{equation}
Notice that, in a typical massive MIMO scenario, $\text{rank}(\Sigmah^d) = r$ is much smaller than $M$. 


\section{Dictionary-based Parametric Representation and Covariance Estimation} \label{sec:channel}
An outline of the steps taken by our proposed method is given in the following:

i)  \textit{Spike Location Estimation for $\gamma_d(\xi)$}: We apply the MUSIC algorithm \cite{stoica2005spectral} to estimate the AoAs of the spike components, i.e., the angles $\{ \phi_i \}_{i=1}^r$ in \eqref{eq:gamma_decomp}, from the $N$ noisy samples $\{\yv [s] : s\in [N]\}$. We let $\{\widehat{\phi}_i \}_{i=1}^{\widehat{r}}$ denote the estimated AoAs, where also the number of spikes $\widehat{r}$ is estimated (not assumed known). 
This is detailed in Section \ref{sec:MUSIC}.  For complete estimation of the discrete part $\Sigmah^d$, the weights $\{c_i\}_{i=1}^{\widehat{r}}$ need to be further recovered. This is done jointly with the coefficients of the continuous part,  as elaborated in the next step.

ii) \textit{Dictionary-based Method for Joint Estimation of $\gamma_d(\xi)$ and $\gamma_c(\xi)$}: 
We assume that the ASF continuous part can be written as
\begin{equation} \label{dicdic}
    \gamma_c(\xi) = \sum_{i=1}^G b_i \psi_i(\xi), 
\end{equation}
where $\clG_c=\{\psi_i(\xi): i \in [G]\}$ is a suitable dictionary of non-negative density functions (not containing spikes). Now the goal is to estimate the model parameters  $\{c_i \}_{i=1}^{\widehat{r}}$ and $\{b_i\}_{i=1}^G$ assuming the form of ASF $\gamma(\xi) = \sum_{i=1}^{\widehat{r}} c_i \delta(\xi - \widehat{\phi}_i) + \sum_{i=1}^G b_i \psi_i(\xi)$. The parameters estimation can be formulated as a constrained Least-Squares problem, as detailed in Section \ref{sec:LS}. In particular, if the functions in $\clG_c$ have disjoint support, we obtain a Non-Negative Least-Squares (NNLS) problem, while if the functions in $\clG_c$ have overlapping supports, we obtain a Quadratic Programming (QP) problem. As an alternative, we can use ML estimation. The log-likelihood function is a difference of concave functions in the model parameters, and can be maximized using majorization-minimization (MM) methods. However, general MM approaches are prohibitively computationally complex for typical values of $M$ arising in massive MIMO.  It turns out that  when $\clG_c$ is also formed by Dirac deltas 
spaced on a discrete grid, the likelihood function maximization can be obtained through EM with much lower complexity. Since in general EM is guaranteed to converge to local optima, the initialization plays an important role. We propose to use the result of the (low-complexity) NNLS estimator as initial point for the EM iteration. The resulting method is detailed in Section \ref{sec:ML}. 
	
iii) \textit{From ASF to Covariance Estimation}: Finally, having estimated $\gamma_d(\xi)$ and $\gamma_c(\xi)$, we estimate the covariance $\Sigmah$ via \eqref{eq:channel_cov}. In particular, since $\gamma(\xi)$ depends only on the scattering geometry and it is invariant with frequency,\footnote{This statement holds over not too large frequency ranges, where the scattering properties of materials is virtually frequency independent. For example, this property holds for the UL and DL carrier frequencies of the same FDD system, however it does not generally hold over much larger frequency ranges. For example, the ASFs of the same environment at (say) 3.5 GHz and at 28 GHz are definitely different, although quite related \cite{ali2016estimating}.} the mapping $\gamma(\xi) \rightarrow \Sigmam_\hv$ defined by \eqref{eq:channel_cov} can be applied for different carrier frequencies by changing the wavelength parameter $\lambda_0$ in the expression of the array response vector $\av(\xi)$. Specifically, replacing $\av(\xi)$ in \eqref{eq:channel_cov}  with  $\av(\nu\xi)$ where $\nu$ is a wavelength expansion/contraction coefficient, we obtain an estimator at the carrier frequency $f_\nu = \nu f_0$, where $f_0 = c_0/\lambda_0$ is the UL carrier frequency and $c_0$ denotes the speed of light. The resulting covariance estimator is given by 
\begin{align}
    \widehat{\Sigmam}_{\hv}^{(\nu)} = \sum^{G + \widehat{r}}_{i=1} u_i^\star \Sm_i^{(\nu)},
\end{align}
where $\{u_i^\star : i \in [G + \widehat{r}]\}$ are the estimated model parameters, and where  we define
$\Sm^{(\nu)}_i = \int^1_{-1}\psi_i(\xi) \av(\nu \xi)\av^\herm(\nu\xi) d\xi$ for $i \in [G]$ and $\Sm^{(\nu)}_{G+i} = \av(\nu \widehat{\phi}_{i}) \av^\herm(\nu \widehat{\phi}_i)$ for $i \in [\widehat{r}]$. In particular, for the DL carrier we have $\nu  > 1$ since in typical cellular systems the DL carrier frequency is higher than the UL carrier frequency. 
	
\subsection{Discrete ASF Support Estimation}\label{sec:MUSIC}

In this part, we first estimate the model order $r$ by applying the well-known minimum description length (MDL) principle \cite{schwarz1978estimating}. Then, we use the MUSIC method (e.g., see \cite{schmidt1986multiple,stoica1989music} and references therein), to estimate the locations $\{\phi_i\}_{i=1}^r$ of the spikes in discrete part of ASF $\gamma_d(\xi)$.
Given the noisy samples $\Ym = \{y[s]\}_{s=1}^N$, the sample covariance of $\Ym$ is given as
\begin{align}\label{eq:sample_cov_y}
    \widehat{\Sigmam}_\bfy = \frac{1}{N} \sum_{s=1}^{N} \bfy[s] \bfy[s]^\herm.
\end{align}
Let $\widehat{\Sigmam}_\bfy = \widehat{\bfU} \widehat{\Lambdam} \widehat{\bfU}^\herm$ be the eigendecomposition of $\widehat{\Sigmam}_\bfy$, where $\widehat{\Lambdam}=\diag(\widehat{\lambda}_{1,M}, \dots, \widehat{\lambda}_{M,M})$ denotes the diagonal matrix consisting of the eigenvalues of $\widehat{\Sigmam}_\bfy$. Without loss of generality, we assume that the eigenvalues are ordered as $\widehat{\lambda}_{1,M}\geq \dots \geq  \widehat{\lambda}_{M,M}$. 

First, we wish to estimate the model order $r$. We adopt the classical MDL method provided in \cite{wax1985detection}, which was designed for estimating the number of sources impinging on a passive array of sensors under white Gaussian noise with unknown noise power. Specifically, assuming that the covariance matrix of the observation has the form 
\begin{equation}\label{eq:sigmay_MDL}
    \widetilde{{\Sigmam}}_{\yv}=\Sigmah^d + \sigma^2\mathbf{I}_M,
\end{equation}
where $\Sigmah^d$ is the rank-$r$ covariance matrix in \eqref{eq:channel_cov} and $\sigma^2$ is an unknown parameter.\footnote{Note that $\widetilde{{\Sigmam}}_{\yv} \neq \mathbb{E}[\yv[s]\yv[s]^\herm]$ is not the true covariance of samples, since the contribution of diffuse clusters is not considered. 
Nevertheless, in Remark~\ref{remark_MDL} we argue that the MDL method in \cite{wax1985detection} can be applied to our problem without significant degradation by simply ignoring the presence of the diffuse component.}  
The MDL approach selects a model order $k \in\{0,1,\dots,M-1\}$ from a parameterized  family of probability densities $f(\Ym|\thetav^{(k)})$ that minimize the 
so-called total description length of the samples, which is tightly approximated by Rissanen bound \cite{barron1998minimum} (after neglecting $o(1)$ terms) as
\begin{equation}
    \text{MDL} = -\log (f(\Ym|\thetav^{(k)})) + \frac{1}{2}|\thetav^{(k)}|\log (N),
\end{equation}
where $|\thetav^{(k)}|$ is the number of free parameters in $\thetav^{(k)}$. The number of spikes is estimated by minimizing the MDL metric as \cite{wax1985detection} 
\begin{align}  
    \widehat{r} &= \arg\min_k \; \left \{ L(\thetav^{(k)}) + \frac{1}{2}k(2M-k)\log(N) \right \} ,   \label{bubu}
\end{align}
where $L(\thetav^{(k)}) = -\log f(\Ym|\thetav^{(k)})$ is the minus log-likelihood of $\Ym$, given by $L(k) = N(M-k)\log\left(\frac{a(k)}{b(k)}\right)$ with $a(k) = \frac{1}{M-k}\sum^M_{i=k+1}\widehat{\lambda}_{i,M}$ and when $k=0$, $b(k) = 1$, when $k > 0$, $b(k) = (\prod^k_{i=1}\widehat{\lambda}_{i,M})^{-\frac{1}{M-k}}$.\footnote{Note that our expression of $b(k)$ is different from that in \cite{wax1985detection}. However, it can be easily shown that the resulting 
MDL expressions differ only by a constant $M$, which has no influence in the minimization in \eqref{bubu}. Here 
$b(k)$ has been modified to avoid the product of small (nearly zero when sample size is quite small) eigenvalues of $\widehat{\Sigmam}_\bfy$.}   

\begin{remark}\label{remark_MDL} 
It is noticed that the proposed MDL approach does not explicitly consider the diffuse part. It considers only a spike signal subspace with dimension $r$ corresponding to the $r$ largest eigenvalues of $\widetilde{{\Sigmam}}_{\yv}$ and a white noise with power $\sigma^2$ resulting in $M-r$ smallest eigenvalues of $\widetilde{{\Sigmam}}_{\yv}$ equaling to $\sigma^2$. The diffuse component in our model can be considered as an additional {\em spatially colored noise} 
as far as the spike estimation is concerned. 
It is shown in \cite{xu1995analysis} that in the presence of colored Gaussian noise, MDL tends to overestimate the model order with increasing number of samples $N$. However, we shall see later that overcounting the spikes is not catastrophic in the overall covariance estimation scheme, since the coefficients of fictitious spikes are typically estimated as near zero in the model coefficient estimation step.  In other words, it is always better to overestimate the number of spikes than to underestimate them so that the true spikes are not missed. In \cite{liavas2001behavior} it is shown that undermodeling may happen when the signal and the noise eigenvalues are not well separated and the noise eigenvalues are clustered sufficiently closely. 
We show that the undermodeling is not likely to happen in our case by showing that the gap between $r$ large (containing the contribution of the spikes) and $M-r$ small eigenvalues (containing only the contribution of the diffuse scattering and the noise) of the sample covariance  $\widehat{\Sigmam}_\bfy$ becomes larger and larger as the number of antennas $M$ increases. 
This can be explained by noticing that, as the array angular resolution increases, the amount of received signal energy in each angular bin decreases with the bin
width in the bins that contain no spikes, while it remains roughly constant with $M$  if the angular bin contains a spike.  More precisely, an angular bin of width $2/M$ and centered at $\xi$ contains a received signal power proportional to $2 (\gamma_c(\xi) + N_0)/M$ if no spike falls in the interval, and to $c_i + 2 (\gamma_c(\xi) + N_0)/M$ if the spike located at $\phi_i$ falls in the interval. By Szeg\"o's theorem (e.g., see \cite{adhikary2013joint} and references therein),  the eigenvalues of the covariance matrix $\Sigmam_\yv = \Sigmam_\hv + N_0 \Id_M$ converge asymptotically as $M \rightarrow \infty$ to the energy received on equally spaced angular bins of width $2/M$ over the interval $[-1, 1]$ in the $\xi$ domain. Fig.~\ref{fig:eig_vec} corroborates this showing the separation between the eigenvalues of the sample covariance matrix $\widehat{\Sigmam}_\bfy$ for different number of antennas $M=25,~50,~100$ with fixed sample size to channel dimension ratio $N/M=2$ and the same channel geometry defined by the ASF
\begin{equation}\label{eq:ex_asf}
\gamma (\xi) = \rect_{[-0.7,-0.4]} +\rect_{[0,0.6]} + (\delta (\xi+0.2)+\delta (\xi-0.4))/2,
\end{equation}
where $\rect_{\Ac} $ is 1 over the interval $\Ac$ and zero elsewhere. This ASF contains $r=2$ spikes and two rectangular-shaped diffuse components. The SNR is set to $20$ dB. For a large enough number of antennas (and even for a moderate number such as $M=25$) the eigenvalue distribution shows a significant jump, such that the two largest eigenvalues ``escape" from the rest. Note that, by increasing the number of antennas, this separation becomes more and more significant. Hence, the undermodeling of MDL in the relevant case of massive MIMO is unlikely to occur.  \hfill $\lozenge$
\end{remark}

\begin{figure*}[t]
		\centering
		\begin{subfigure}[b]{0.31\textwidth}
			\includegraphics[width=\textwidth]{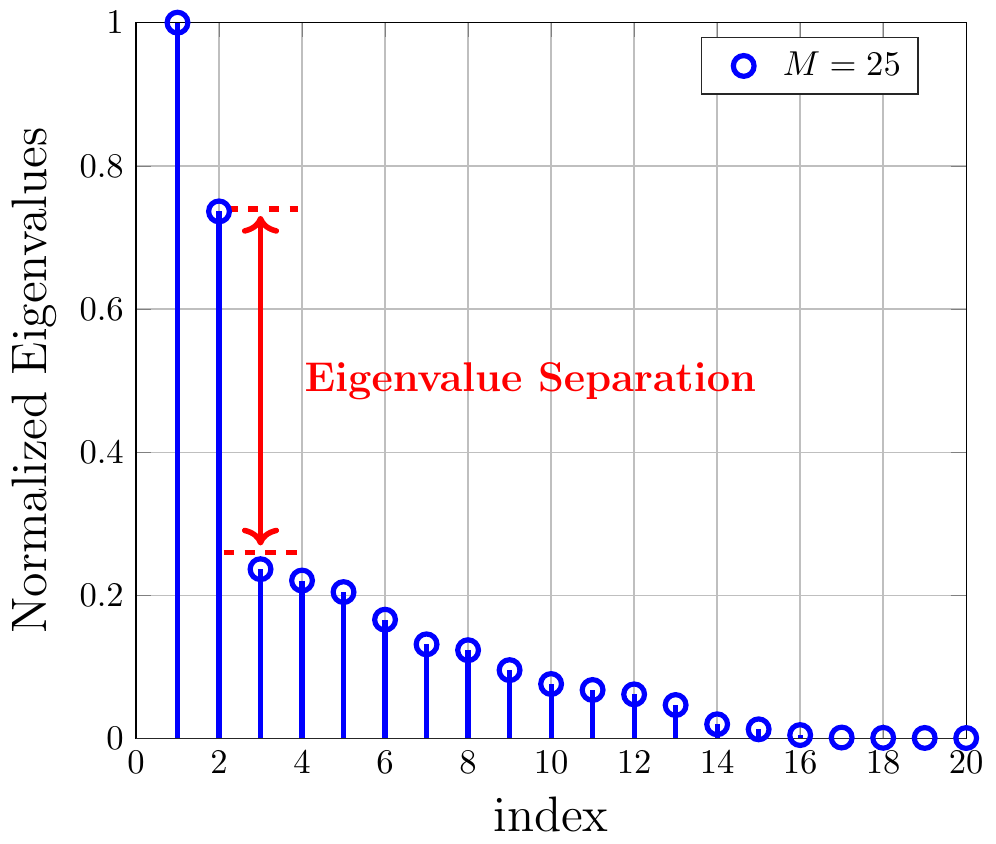}
		\end{subfigure}
		~ 
		\begin{subfigure}[b]{0.31\textwidth}
			\includegraphics[width=\textwidth]{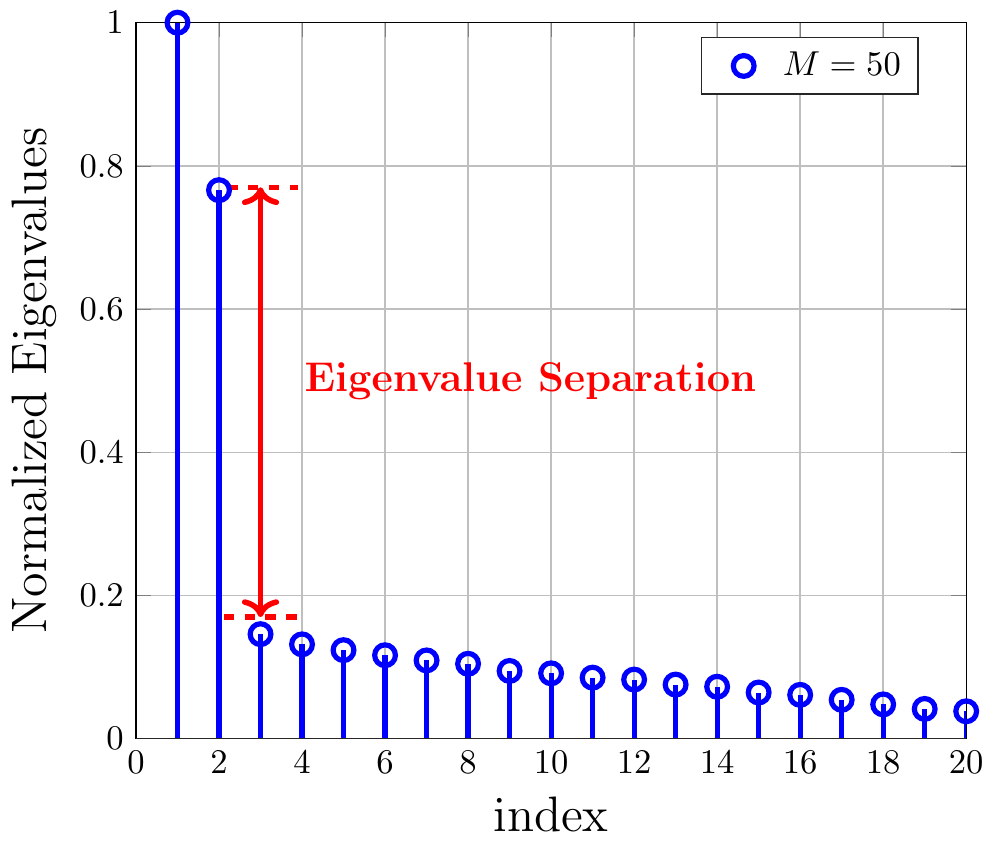}
		\end{subfigure}
		~
		\begin{subfigure}[b]{0.31\textwidth}
		\includegraphics[width=\textwidth]{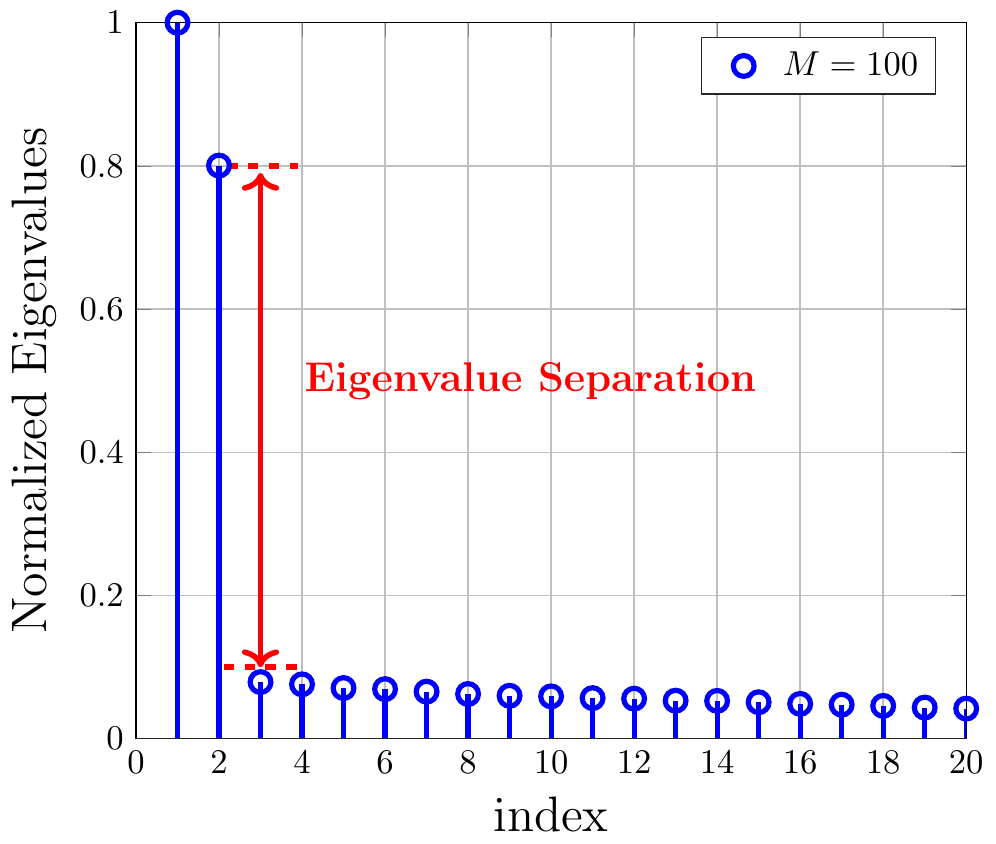}
\end{subfigure}
\caption{ Eigenvalue distribution for the sample covariance matrix $\widehat{\Sigmam}_\bfy(M)$ associated with the example ASF in \eqref{eq:ex_asf} for different values of $M$ and $N/M=2$.}\label{fig:eig_vec}
\end{figure*}

Once the number of spikes is estimated using MDL, MUSIC proceeds to identify the locations of those spikes. Let $\widehat{\uv}_{\widehat{r}+1,M}, \dots, \widehat{\uv}_{M,M}$ be the eigenvectors in $\widehat{\bfU}$ corresponding to the smallest $M-\widehat{r}$ eigenvalues and let us define $\bfU_\text{noi}=[\widehat{\uv}_{\widehat{r}+1,M}, \dots, \widehat{\uv}_{M,M}]$ as the $M \times (M-\widehat{r})$ matrix corresponding to the noise subspace. The MUSIC objective function is defined as the pseudo-spectrum:
\begin{equation}\label{eq:pseudo_spectrum_1}
    \widehat{\eta}_M (\xi) = \left\|\bfU_\text{noi}^\herm \bfa(\xi)\right\|^2=\sum_{k=\widehat{r}+1}^M \left| \av (\xi)^\herm \widehat{\uv}_{k,M} \right|^2.
\end{equation}
MUSIC estimates the support $\{ \widehat{\phi}_1, \ldots,  \widehat{\phi}_{\widehat{r}}\}$ of the spikes by identifying $\widehat{r}$ dominant minimizers of $\widehat{\eta}_M (\xi)$. It can be shown that for a finite number of spikes $r$, as the number of antennas $M$ and the number of samples $N$ grow to infinity with fixed ratio, MUSIC yields an asymptotically consistent estimate of the spike AoAs. The details are given in Appendix~\ref{sec:MUSIC_analysis} for the sake of completeness. Fig.~\ref{fig:ASF_MUSIC} illustrates the normalized values of the pseudo-spectrum \eqref{eq:pseudo_spectrum_1} for the example ASF in \eqref{eq:ex_asf} and its corresponding sample covariance $\widehat{\Sigmam}_\yv$ for $M=25$ and $N/M=2$. As we can see, the $r=2$ smallest minima of the pseudo-spectrum occur very close to the points $\phi_1=-0.2$ and $\phi_2=0.4$, which are the locations of the spikes in the true ASF. 

\begin{figure}[t]
		\centering
		\includegraphics[width=0.5\linewidth]{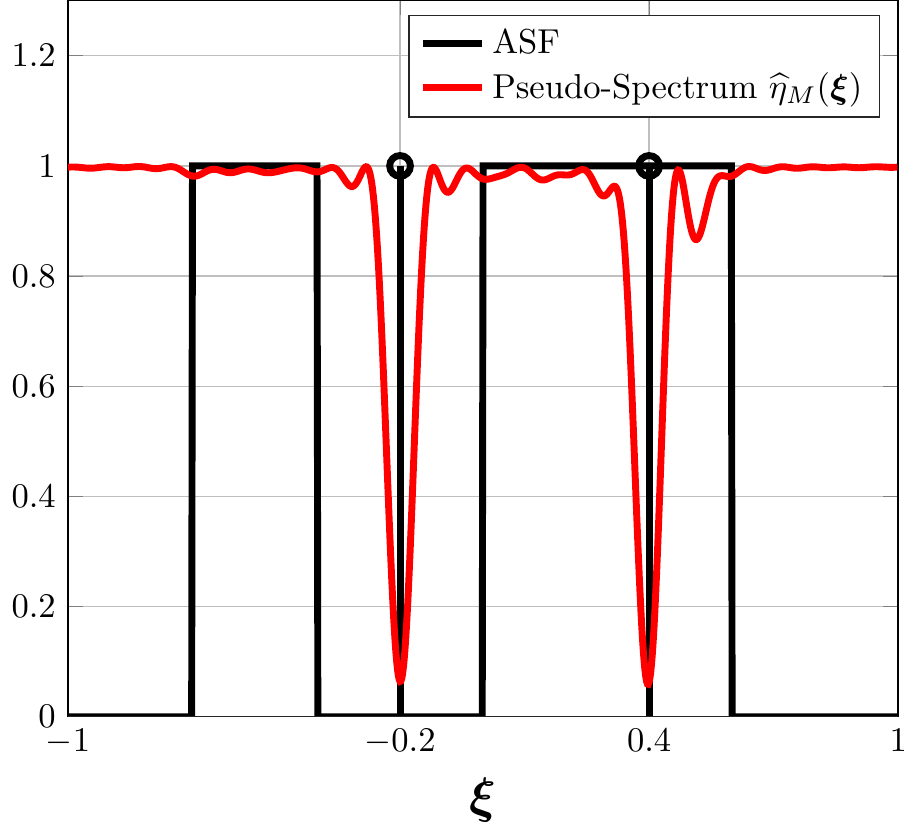}
		\caption{ The pseudo-spectrum plotted for the example ASF in \eqref{eq:ex_asf} with $M=25$ and $N/M=2$.}
		\label{fig:ASF_MUSIC}
\end{figure}

\subsection{Coefficients Estimation by Constrained Least-Squares}  \label{sec:LS}

After obtaining an estimate of the number of spikes and their locations as described before, we need to 
find the spike coefficients $\cv = [c_1,\ldots,c_{\widehat{r}}]^\transp \in \RR_+^{\widehat{r}}$ and the coefficients  $\bv = [b_1,\ldots, b_G]^\transp \in \bR^G$ 
of the  continuous ASF component $\gamma_c (\xi)$ in the form (\ref{dicdic}). 
The dictionary functions $\{\psi_i(\xi)\}$ are selected according to the available prior knowledge about the propagation environment. 
Some typical choices include localized functions $\psi_i(\xi)$, such as Gaussian, Laplacian, or rectangular functions, with a suitably chosen support. 
Fig.~\ref{fig:dic_example} shows an example of Dirac delta and Gaussian dictionaries. 
\begin{figure}[t]
	\centering
	\begin{subfigure}[b]{0.45\textwidth}
		\includegraphics[width=\textwidth]{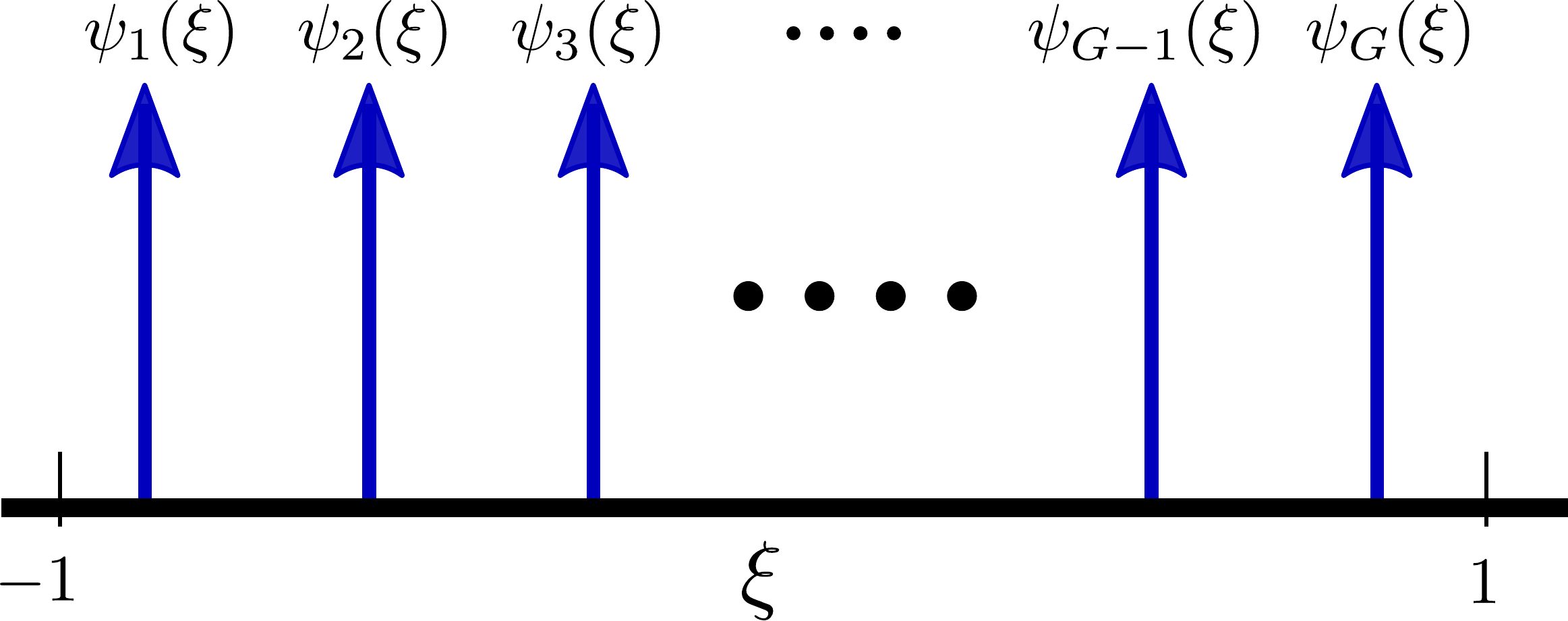}
		\caption{Dirac delta dictionaries}
		\label{fig:kernel_spike}
	\end{subfigure}
	~ 
	\begin{subfigure}[b]{0.45\textwidth}
		\includegraphics[width=\textwidth]{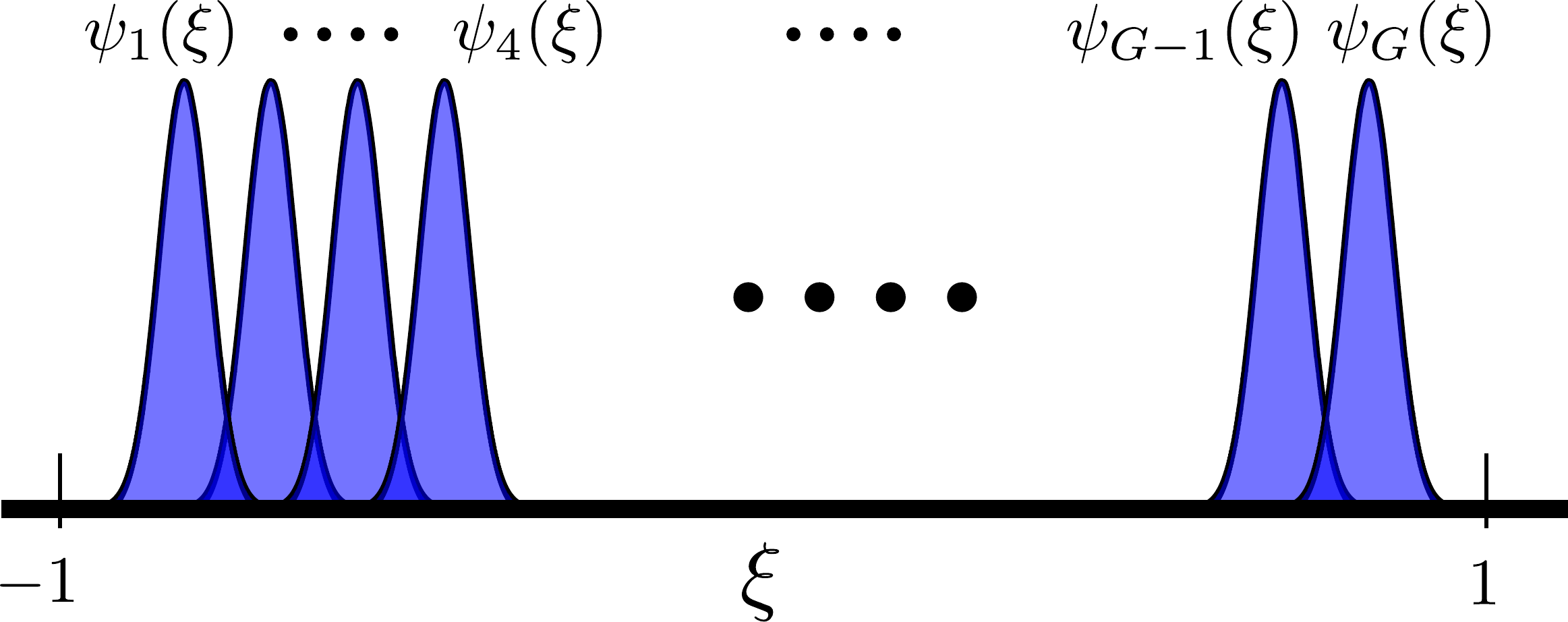}
		\caption{Gaussian dictionaries}
		\label{fig:kernel_gaussian}
	\end{subfigure}
	\caption{ Examples of Dirac delta and Gaussian dictionaries}	
	\label{fig:dic_example}
\end{figure} 
The above representation of the ASF results in the  parametric form of the channel covariance given by 
\begin{equation}\label{eq:decomp}
\begin{aligned}
\SigmaS 
&= \sum_{i=1}^{G+\wh{r}} u_i \Sm_i ,
\end{aligned}
\end{equation}
where we define the model parameter vector $\uv = [u_1,\ldots,u_{G+\widehat{r}}]^\transp = [\bv^\transp,\cv^\transp]^\transp$ and the positive 
semi-definite Hermitian symmetric matrices $\Sm_i =\int_{-1}^1 \psi_{i} (\xi) \av (\xi) \av (\xi)^\herm d \xi,\;\forall i \in [G]$, and $\Sm_{i+G} = \av (\widehat{\phi}_{i}) \av (\widehat{\phi}_{i})^\herm, \;\forall i \in [\wh{r}]$. 


In order to estimate the coefficients vector $\uv$ from the noisy samples $\{\bfy[s]: s\in [N]\}$, we propose three algorithms, namely NNLS estimator, QP estimator and ML-EM estimator. 

\vspace{.2cm}
\hspace{-.5cm}{\em 1) NNLS Estimator}
\vspace{.1cm}

If the dictionary functions have disjoint support, since the overall $\gamma_c(\xi)$ must be non-negative, it follows that the coefficients $\{u_i : i \in [G]\}$ must take values in $\RR_+$, i.e., the whole vector $\uv$ is non-negative.  We know that if the number of samples $N$ is large enough, the sample covariance matrix $\widehat{\Sigmam}_\bfy$ would converge to $\Sigmam_\bfy=\Sigmam_\hv + N_0 \bfI_M$. Then, our goal is to find a good fitting to the sample covariance matrix $\widehat{\Sigmam}_\bfy$ from the set of all covariance matrices of the form 
\begin{align}\label{par_cov_last}
    \Sigmam_\bfy=\Sigmam_\hv(\uv) + N_0 \bfI_M = \sum_{i=1}^{G+\wh{r}} u_i \bfS_i + N_0 \bfI_M.
\end{align}
For this purpose, we use the Frobenius norm as a fitting metric and obtain an estimate of the model coefficients as
\begin{align}
    \bfu^{\star}=\argmin_{\bfu \in \RR_+^{G + \wh{r}}} \; \left\| \widehat{\Sigmam}_\bfy - \sum_{i=1}^{G+\wh{r}} u_i \bfS_i - N_0 \bfI_M\right\|_\sfF^2 = \argmin_{\bfu \in \RR_+^{G + \wh{r}}} \; \left\| \widehat{\Sigmam}_\bfh - \sum_{i=1}^{G+\wh{r}} u_i \bfS_i \right\|_\sfF^2.
\end{align}
Applying vectorization and defining $\Am=[\vec(\Sm_1), \dots, \vec(\Sm_{G+\wh{r}})]$ and $\fv=\vec(\widehat{\Sigmam}_\bfh)$, we can write this as a NNLS  problem
\begin{align}\label{nnls_init_ccp}
    \bfu^{\star}=\argmin_{\bfu \in \RR_+^{G + \wh{r}}}\; \|\bfA \bfu - \bff\|^2,
\end{align}
which can be efficiently solved using a variety of convex optimization techniques (see, e.g., \cite{chen2010nonnegativit,  lawson1974solving}). In our simulation, we use the built-in MATLAB function \textit{lsqnonneg}.

Note that a ULA covariance is Hermitian Toeplitz. By leveraging this, we can reformulate the optimization problem so that the problem dimension is reduced.  Concretely, let $\widetilde{\Sigmam}_{\hv}$ denote the orthogonal projection of  $\widehat{\Sigmam}_\bfh$ onto the space of Hermitian Toeplitz matrices.  This is obtained by averaging the diagonals of $\widehat{\Sigmam}_{\hv}$ and replacing the diagonal elements by the corresponding average  value (see Appendix~\ref{sec:teop_appendix} for completeness). We define the first column of $\widetilde{\Sigmam}_{\hv}$ as $\widetilde{\boldsymbol{\sigma}}$. Then, \eqref{nnls_init_ccp} can be reformulated as 
\begin{align}\label{nnls_toep}
    \bfu^{\star}=\argmin_{\bfu \in \RR_+^{G + \wh{r}}}\; \left\|\Wm\left(\widetilde{\bfA} \bfu - \widetilde{\boldsymbol{\sigma}}\right)\right\|^2,
\end{align}
where $\widetilde{\bfA} = [(\Sm_1)_{\cdot,1},\dots,(\Sm_{G+\wh{r}})_{\cdot,1}]$ is the matrix collecting the first columns $(\Sm_i)_{\cdot,1}$ of the $\Sm_i$ and 
$\Wm = \diag\left(\left[\sqrt{M}, \sqrt{2(M-1)}, \sqrt{2(M-2)}, \dots, \sqrt{2}\right]^{\transp}\right)$ is the weighting matrix to compensate for the number of times an element is repeated in a Hermitian Toeplitz matrix.
\begin{lemma}\label{lemma_1}
    The optimization problems in \eqref{nnls_init_ccp} and \eqref{nnls_toep} are equivalent.
\end{lemma}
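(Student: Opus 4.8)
\noindent\emph{Proof proposal.} The plan is to show that, over the common feasible set $\RR_+^{G+\wh r}$, the objective of \eqref{nnls_toep} equals that of \eqref{nnls_init_ccp} up to an additive constant not depending on $\bfu$, so that the two minimizers coincide. First I would record the structural fact that every $\Sm_i$ is Hermitian Toeplitz: for $i\in[G]$ this holds because $\av(\xi)\av(\xi)^\herm$ is Hermitian Toeplitz for each $\xi$ and the set of such matrices is a real linear subspace closed under the integral defining $\Sm_i$, while for $i\in\{G+1,\dots,G+\wh r\}$ it is immediate from $\Sm_{G+i}=\av(\wh\phi_i)\av(\wh\phi_i)^\herm$. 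Hence for every $\bfu$ the matrix $\Sigmam_\hv(\bfu)=\sum_{i=1}^{G+\wh r}u_i\Sm_i$ is Hermitian Toeplitz, and by the very definition of $\widetilde{\bfA}$ its first column is $\widetilde{\bfA}\bfu$. Undoing the vectorization, the objective of \eqref{nnls_init_ccp} is $\|\bfA\bfu-\bff\|^2=\big\|\Sigmam_\hv(\bfu)-\widehat{\Sigmam}_\bfh\big\|_\sfF^2$.

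Next I would use the fact (established in Appendix~\ref{sec:teop_appendix}) that the diagonal-averaging operation producing $\widetilde{\Sigmam}_\hv$ is exactly the orthogonal projection of $\widehat{\Sigmam}_\bfh$ onto the subspace $\Tc$ of $M\times M$ Hermitian Toeplitz matrices with respect to the real Frobenius inner product $\langle\Am,\Bm\rangle=\Re\,\tr(\Am^\herm\Bm)$; equivalently, $\widehat{\Sigmam}_\bfh-\widetilde{\Sigmam}_\hv\perp\Tc$. Since $\Sigmam_\hv(\bfu)-\widetilde{\Sigmam}_\hv\in\Tc$, the Pythagorean identity yields
\[
\big\|\Sigmam_\hv(\bfu)-\widehat{\Sigmam}_\bfh\big\|_\sfF^2=\big\|\Sigmam_\hv(\bfu)-\widetilde{\Sigmam}_\hv\big\|_\sfF^2+\big\|\widehat{\Sigmam}_\bfh-\widetilde{\Sigmam}_\hv\big\|_\sfF^2,
\]
in which the last term is a constant independent of $\bfu$.

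Finally I would invoke the elementary norm identity for an $M\times M$ Hermitian Toeplitz matrix $\Tm$ with first column $\boldsymbol{\tau}=[\tau_0,\dots,\tau_{M-1}]^\transp$, namely $\|\Tm\|_\sfF^2=M|\tau_0|^2+2\sum_{k=1}^{M-1}(M-k)|\tau_k|^2=\|\Wm\boldsymbol{\tau}\|^2$ with $\Wm$ the weighting matrix of the statement, which holds because $\tau_0$ appears $M$ times on the diagonal and each $\tau_k$ (together with its conjugate) appears $2(M-k)$ times off the diagonal. Applying this to $\Tm=\Sigmam_\hv(\bfu)-\widetilde{\Sigmam}_\hv$, whose first column is $\widetilde{\bfA}\bfu-\widetilde{\boldsymbol{\sigma}}$, gives $\big\|\Sigmam_\hv(\bfu)-\widetilde{\Sigmam}_\hv\big\|_\sfF^2=\big\|\Wm(\widetilde{\bfA}\bfu-\widetilde{\boldsymbol{\sigma}})\big\|^2$, i.e.\ the objective of \eqref{nnls_toep}. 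Combining these relations shows that the two objectives differ by the $\bfu$-independent constant $\big\|\widehat{\Sigmam}_\bfh-\widetilde{\Sigmam}_\hv\big\|_\sfF^2$, whence the two constrained problems have identical sets of minimizers. I expect no genuine obstacle: the only points needing care are (i) checking that ``averaging the diagonals'' is literally the Frobenius-orthogonal projection onto $\Tc$, so that the cross term in the Pythagorean step vanishes — precisely what Appendix~\ref{sec:teop_appendix} provides — and (ii) keeping the real-versus-complex Frobenius inner product straight, since $\bfu$ is real while the $\Sm_i$ and $\widehat{\Sigmam}_\bfh$ are complex Hermitian.
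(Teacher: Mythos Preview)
Your proposal is correct and follows essentially the same route as the paper: both reduce the claim to the Pythagorean identity obtained from the fact that $\widetilde{\Sigmam}_\hv$ is the Frobenius-orthogonal projection of $\widehat{\Sigmam}_\hv$ onto the Hermitian Toeplitz subspace, so that the two objectives differ by the $\bfu$-independent constant $\|\widehat{\Sigmam}_\hv-\widetilde{\Sigmam}_\hv\|_\sfF^2$. The only difference is presentational: the paper states in one line that the two problems ``differ only by the replacement of $\widehat{\Sigmam}_\hv$ with $\widetilde{\Sigmam}_\hv$'' and then proves the projection step, whereas you additionally spell out (i) why each $\Sm_i$ (and hence $\Sigmam_\hv(\bfu)$) is Hermitian Toeplitz and (ii) the elementary identity $\|\Tm\|_\sfF^2=\|\Wm\boldsymbol\tau\|^2$ that links the Frobenius norm of a Hermitian Toeplitz matrix to the weighted norm of its first column---a step the paper tacitly assumes.
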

\begin{proof}
Note that the difference between \eqref{nnls_init_ccp} and \eqref{nnls_toep} is only the replacement of $\widetilde{\Sigmam}_{\hv}$ from $\widehat{\Sigmam}_{\hv}$. Thus, it is sufficient to show the equivalence of the following two optimization problems: \begin{align}
    \text{P}1: \; \min_{\Xm\in\mathcal{HT}} \|\Xm - \widehat{\Sigmam}_{\hv}\|^2_\sfF, \quad \text{P}2: \; \min_{\Xm\in\mathcal{HT}} \|\Xm - \widetilde{\Sigmam}_{\hv}\|^2_\sfF,
\end{align}
where $\mathcal{HT}$ is the set of Hermitian Toeplitz matrices. Let $\Deltam = \widehat{\Sigmam}_{\hv} - \widetilde{\Sigmam}_{\hv}$. Then, P1 is rewritten as $\underset{\Xm\in\mathcal{HT}}{\min}\;\|\Xm - \widetilde{\Sigmam}_{\hv} - \Deltam\|^2_\sfF$. Since $\widetilde{\Sigmam}_{\hv}$ is the orthogonal projection of $\widehat{\Sigmam}_{\hv}$ on the set $\mathcal{HT}$, by the orthogonality principle the difference $\Deltam = \widehat{\Sigmam}_{\hv} - \widetilde{\Sigmam}_{\hv}$ is orthogonal to the whole set, i.e., $\langle\Deltam, \Tm\rangle := \trace(\Deltam^\herm \Tm) = 0, \; \forall \Tm \in \mathcal{HT}$. It follows that for any $\Xm \in \mathcal{HT}$ the difference $\Xm - \widetilde{\Sigmam}_{\hv}$ is also in $\mathcal{HT}$. Thus
$\|\Xm - \widetilde{\Sigmam}_{\hv} - \Deltam\|^2_\sfF = \|\Xm - \widetilde{\Sigmam}_{\hv}\|^2_\sfF + \|\Deltam|^2_\sfF$ where $\|\Deltam|^2_\sfF$ is constant with respect to $\Xm$ and therefore plays no role in minimization. This proves the equivalence of P1 and P2. 
\end{proof}

\vspace{.2cm}
\hspace{-.5cm}{\em 2) QP Estimator}
\vspace{.1cm}

When the dictionary functions $\psi_i(\xi)$ have overlapping support (e.g., with Gaussian or Laplacian densities), the model coefficients $\{b_i : i = [G]\}$ may take negative values as long as the resulting continuous ASF component is non-negative, i.e.,  $\gamma_c(\xi) = \sum^{G}_{i=1} b_i \psi_i(\xi) \geq 0, \; \forall \xi \in [-1,1]$. We approximate this infinite-dimensional constraint by defining a sufficiently fine grid of  equally spaced points $\{\xi_1,\dots,\xi_{\widetilde{G}}\}$ on $\xi\in[-1,1]$, where $\widetilde{G}$ is generally significantly larger than $G$ and impose the non-negativity of $\gamma_c(\cdot)$ at these points. The resulting constraint is
\begin{align}
    \sum^{G}_{i=1} b_i \psi_i(\xi_j) \geq 0, \quad \forall j \in [\widetilde{G}], \quad
    \iff  \quad \widetilde{\boldsymbol{\Psi}} \bv  \geq 0,
\end{align}
where the elements of the matrix $\widetilde{\boldsymbol{\Psi}} \in \bR^{\widetilde{G}\times G}$ are obtained as $ [\widetilde{\boldsymbol{\Psi}}]_{i,j} = \psi_j(\xi_i), \;\forall i\in[\widetilde{G}], j\in[G]$. Then, the estimation problem under dictionary functions with overlapping support is given by
\begin{equation}\label{eq:generalGaussian}
\begin{aligned}
    \underset{\cv \in \RR_+^{\wh{r}}, \; \bv \in \RR^G}{\text{minimize}} \quad \left\|\Wm\left(\widetilde{\Am} \uv  - \widetilde{\boldsymbol{\sigma}}\right)\right\|^2, \quad
    \text{s.t.} \;\; \widetilde{\boldsymbol{\Psi}} \bv \geq 0,
\end{aligned}
\end{equation}
where $\uv = [\bv^\transp, \cv^\transp]^\transp$ is consistently with the definition in \eqref{eq:decomp}. Notice that \eqref{eq:generalGaussian} is a QP problem and can be solved using standard QP solvers, such as \textit{quadprog} in MATLAB.

\subsection{Coefficients Estimation by Maximum-Likelihood}  \label{sec:ML}

Instead of directly fitting the Frobenius norm between the sample covariance and the parametric covariance, the model parameters can be estimate by the ML method. Give the matrix of the observed noisy channel samples $\Ym$, the likelihood function of $\Ym$ assuming $\Sigmam_\hv = \Sigmam_\hv(\uv)$ in the form of (\ref{eq:decomp}) is given by 
\begin{equation}\label{eq:ML_cals_1}
\begin{aligned}
p \left(\Ym|\uv \right) &= \prod_{s=1}^{N} p \left(\yv[s]|\uv \right) \\
                        & = \prod_{s=1}^{N} \frac{\exp \left( - \yv[s]^\herm \left( \SigmaS + N_0 \mathbf{I}_M \right)^{-1} \yv[s] \right)}{\pi^M \det \left( \SigmaS + N_0 \mathbf{I}_M \right)} \\
                        &= \frac{\exp \left( -\trace\left(\left(\SigmaS + N_0 \mathbf{I}_M \right)^{-1} \Ym \Ym^\herm\right) \right)}{\pi^{MN} \left(\det(\SigmaS + N_0 \mathbf{I}_M)\right)^N }.
\end{aligned} 
\end{equation}
Using \eqref{eq:ML_cals_1} we can form the minus log-likelihood function  $f_{\text{ML}}(\uv) := -\frac{1}{N} \log p \left(\Ym|\uv \right)$. Then, the ML based covariance estimator is obtained by minimizing $f_{\text{ML}}(\uv)$ with respect to the real and non-negative coefficients vector $\uv$, which is formulated as the optimization problem:
\begin{equation}\label{eq:opt_prog}
\begin{aligned}
\underset{\uv \in \bR^{G+\wh{r}}_+}{\text{minimize}}\quad f_{\text{ML}}(\bfu) 
    =&\;\underbrace{\log \det \left(\overset{G+\wh{r}}{ \underset{i=1}{\sum} }  u_i \Sm_i  + N_0 \mathbf{I}_M \right)}_{=f_{\text{cav}}(\uv)} + \underbrace{\trace\left( \left(\overset{G+\wh{r}}{ \underset{i=1}{\sum} } u_i \Sm_i  + N_0 \mathbf{I}_M \right)^{-1}\widehat{\Sigmam}_{\yv}  \right)}_{=f_{\text{vex}}(\uv)},
\end{aligned}
\end{equation}
where $\wh{\Sigmam}_\bfy$ is the sample covariance matrix of the observations defined in \eqref{eq:sample_cov_y}. Note that the objective function $f_{\text{ML}}(\uv) = f_{\text{cav}}(\uv)+f_{\text{vex}}(\uv)$ in \eqref{eq:opt_prog} is the sum of a concave and a convex function and thus \eqref{eq:opt_prog} is not a convex problem. 

It is generally difficult to find the global optimum of a non-convex function such as $f_{\text{ML}}(\bfu)$. A standard approach in such cases is to adopt a MM algorithm \cite{hunter2004tutorial,sun2016majorization}, alternating through two steps with an updating surrogate function that has favorable optimization properties (e.g., convexity) and  approximates the lower-bound of the original objective function. Typical examples of MM algorithms are EM method \cite{dempster1977maximum}, cyclic minimization \cite{journee2010generalized}, and the concave-convex procedure \cite{yuille2002concave}. We choose the EM algorithm to iteratively find a good stationary point of $f_{\text{ML}}(\bfu)$ as we will see that this algorithm yields a computationally efficient update rule and excellent empirical results for the task of estimating the parametric ASF coefficients. Note that although the likelihood function in \eqref{eq:opt_prog} is in a general form for any family of dictionary functions, the EM method can be applied only in the case where all the matrices $\Sm_i$ have rank 1, which is the case when the dictionary functions $\psi_i(\xi)$ are Dirac delta functions. In contrast, the more general concave-convex procedure (e.g., see \cite{khalilsarai2020structured} for the application in this case) can deal with any type of dictionary, but yields significantly higher computational complexity, so that it is not suited for large $M$ (massive MIMO case). In this work, we deal with general dictionary functions using constrained LS, while restrict the use of EM to the case of Dirac delta dictionary functions. 

{\bf Application of the EM algorithm.}
When $\psi_i(\xi) = \delta(\xi - \xi_i)$ where $\{\xi_i : i \in [G]\}$ are uniformly spaced points in $[-1,1)$, 
we have $\Sm_i = \av(\xi_i) \av^\herm(\xi_i)$ for $i \in [G]$ and $\Sm_{G+i} = \av(\widehat{\phi}_i) \av^\herm(\widehat{\phi}_i)$ for $i \in [\wh{r}]$. 
Then, defining the extended grid $\{ \xi_i : i \in [G + \wh{r}]\}$ with $\xi_{G+i} = \widehat{\phi}_i$ for $i \in [\wh{r}]$, the parametric form of channel covariance $\Sigmah$ can be written as
\begin{equation}\label{eq:deltaCovariance}
    \Sigmah(\uv) = \sum_{i=1}^{G+\wh{r}} u_i \Sm_i = \Dm \Um \Dm^\herm,
\end{equation}
where $\Dm = [\av(\xi_1), \dots, \av(\xi_{G+\wh{r}})]$ and $\Um = \diag(\uv)$. Then,  $\Sigmam_{\yv} = \Sigmah + N_0\bfI_M $  can be formally considered as the covariance  matrix of the approximated received samples
\begin{equation}\label{eq:channel_EM}
    \yv[s] = \sum^{G+\wh{r}}_{i=1}\rho_i[s] \av(\xi_i) + \zv[s] = \Dm \xv[s] + \zv[s],
\end{equation}
where $\xv[s]= [\rho_1[s], \dots, \rho_{G+\wh{r}}[s]]^\transp$ is the latent variable vector containing the instantaneous random path gain, and we have $\xv[s]\sim \mathcal{CN}(\mathbf{0},\Um) ,\forall s \in [N]$ by considering $\uv\in\mathbb{R}_+^{G+\wh{r}}$ as the Gaussian prior variance element-wise corresponding to the Gaussian process $\xv[s]$. 

Under the formulation in \eqref{eq:channel_EM}, the $N$ noisy samples collected as columns of the matrix $\Ym$ can be written as $\Ym = \Dm \Xm + \Zm$, where $\Xm = [\xv[1],\dots,\xv[N]]$ and $\Zm = [\zv[1],\dots,\zv[N]]$. The EM algorithm treats $(\Ym, \Xm)$ as the {\em incomplete data}, where $\Xm$ is referred to as {\em missing data}. Using the fact that $\Ym$ given $\Xm$ is Gaussian with mean $\Dm \Xm$ and independent components with variance $N_0$, we have that $p(\Ym,\Xm|\uv) = p(\Ym|\Xm) p(\Xm|\uv)$, where $p(\Ym|\Xm)$ is a conditional Gaussian distribution that does not depend on $\uv$. By marginalizing with respect to $\Xm$ and taking the logarithm, the log-likelihood function takes on the form of a conditional expectation $\Lc(\uv)  := \log p(\Ym | \uv) =  \log \EE_{\Xm|\uv} [ p(\Ym|\Xm) ]$. The EM algorithm maximizes iteratively the lower bound of $\Lc(\uv)$ by alternating the expectation step (E-step) and the maximization step (M-step) \cite{sun2016majorization} as follows. Let $\widehat{\uv}^{(\ell)}$ be the estimate of $\uv$ in the $\ell$-th iteration, by introducing a posterior density of $\Xm$ as $p(\Xm|\Ym,\widehat{\uv}^{(\ell)})$ we have
\begin{align}
    \Lc(\uv)&=  \log \EE_{\Xm|\uv} \left[  \frac{p(\Ym|\Xm)p(\Xm|\Ym,\widehat{\uv}^{(\ell)})}{p(\Xm|\Ym,\widehat{\uv}^{(\ell)})}  \right] 
    = \log \EE_{\Xm|\Ym,\widehat{\uv}^{(\ell)}} \left[  \frac{p(\Ym|\Xm)p(\Xm|\uv)}{p(\Xm|\Ym,\widehat{\uv}^{(\ell)})}  \right] \\
    & \overset{(a)}{\geq}  \EE_{\Xm|\Ym,\widehat{\uv}^{(\ell)}} [\log p(\Ym| \Xm) + \log p(\Xm| \uv) - \log p(\Xm|\Ym,\widehat{\uv}^{(\ell)}) ] := \widetilde{\Lc}(\uv|\widehat{\uv}^{(\ell)}), \label{eq:LUU}
\end{align}
where $(a)$ follows Jensen's inequality and the concavity of $\log(\cdot)$ and gives the lower bound $\widetilde{\Lc}(\uv|\widehat{\uv}^{(\ell)})$ to the log-likelihood function $\Lc(\uv)$.  The E-step consists of computing $\widetilde{\Lc}(\uv|\widehat{\uv}^{(\ell)})$.
Using the joint conditional Gaussianity of $\Ym$ and $\Xm$ given $\uv = \widehat{\uv}^{(\ell)}$, $\widetilde{\Lc}(\uv|\widehat{\uv}^{(\ell)})$ can be evaluated in closed form by computing the conditional mean and covariance of $\xv[s]$ given $\yv[s]$ and $\widehat{\uv}^{(\ell)}$, respectively given by \cite{wipf2004sparse}
\begin{align}\label{eq:EM_mu_sigma}
    \text{E-step:}\quad \boldsymbol{\mu}^{(\ell)}_{\xv[s]} = \frac{1}{N_0}\Sigmam^{(\ell)}_{\xv} \Dm^\herm\yv[s],\;\;
    \Sigmam^{(\ell)}_{\xv} = \left(\frac{1}{N_0}\Dm^\herm\Dm + \left(\widehat{\Um}^{(\ell)}\right)^{-1}\right)^{-1},
\end{align}
where we define  $\widehat{\Um}^{(\ell)} = \diag(\widehat{\uv}^{(\ell)})$.  The M-step consists of computing
\begin{equation}\label{eq:EM_solution}
    \widehat{\uv}^{(\ell+1)} = \argmax_{\uv \in \RR_+^{G + \wh{r}}} \; \widetilde{\mathcal{L}}(\uv|\widehat{\uv}^{(\ell)}).
\end{equation}
Note that $p(\Ym|\Xm)$ and $p(\Xm|\Ym,\widehat{\uv}^{(\ell)})$ in \eqref{eq:LUU} do not depend on $\uv$ and thus can be neglected in the M-step. We find the argument of the maximization in the simplified form (details are omitted for brevity)
\begin{equation}\label{eq:EM-cost}
    \doublewidetilde{\mathcal{L}}(\uv|\widehat{\uv}^{(\ell)})  := \EE_{\Xm|\Ym,\widehat{\uv}^{(\ell)}}[\log p (\Xm|\uv)]=\sum^{G+\wh{r}}_{i=1}\left(-N\log(\pi u_i) - \frac{\sum_{s=1}^{N}\left|\left[\boldsymbol{\mu}_{\xv[s]}^{(\ell)}\right]_i\right|^2+ N\left[\Sigmam^{(\ell)}_{\xv} \right]_{i,i}}{u_i}  \right).
\end{equation}
It is observed from \eqref{eq:EM-cost} that the maximization is decoupled with respect to each component $u_i$ of $\uv$.  
Then, the optimality in the $\ell$-th iteration is also easily obtained in closed form. Setting each partial derivative $\frac{\partial}{\partial u_i}  \doublewidetilde{\mathcal{L}}(\uv|\widehat{\uv}^{(\ell)})$ for $i = [G + \widehat{r}]$ to zero,  we find
\begin{align}
    \text{M-step:}\quad\widehat{u}^{(\ell+1)}_i
    &= \frac{1}{N}\sum^{N}_{s=1}\left|\left[\boldsymbol{\mu}_{\xv[s]}^{(\ell)}\right]_i\right|^2 + \left[\Sigmam^{(\ell)}_{\xv}\right]_{i,i}, \quad \forall i \in [G+\wh{r}].\label{eq:M-step}
\end{align}
With a initial point $\widehat{\uv}^{(0)}$ the ML-EM algorithm iteratively runs the E-step and M-step  until the stop condition  $f_{\text{ML}}(\widehat{\uv}^{(\ell)})-f_{\text{ML}}(\widehat{\uv}^{(\ell+1)})\leq \epsilon_{\text{EM}}$ is met, where $\epsilon_{\text{EM}}$ is the predefined stop threshold. The initial point is usually set as an all ones vector. In our case, it can be set as the result of NNLS solution. 
An extensive comparison of these two initializations obtained by simulating several different channel scattering geometries is depicted in Fig.~\ref{fig:EM_convergence}, which shows that both initializations converge within 100 iterations. It reveals that  the
NNLS initialization converges much faster and results in lower values of the objective function. Therefore, in our results we used the NNLS solution to initialize ML-EM algorithm. 

\begin{remark} 
The use of Dirac delta functions as dictionary functions to approximate the continuous part of the ASF $\gamma_c$ may seems a contradiction, since 
by definition this component does not contain spikes. However, there is a fundamental difference between the Dirac components corresponding to spikes and
the equally spaced ``picket-fence'' used to approximate $\gamma_c$. This can be noticed by observing that the power associated to 
the $i$-th spike component is $c_i$, which is a constant independent on the number of grid points $G$, while the power associated to 
the $i$-th dictionary function $\psi_i(\xi) = \delta(\xi-\xi_i)$ is $b_i =  2\gamma_c(\xi_i)/G$ where $2/G$ is the spacing of the uniform grid on $[-1,1)$. 
For sufficiently large $G$, the smooth function $\gamma_c$ can be approximated by the picket-fence of scaled Dirac deltas in the sense that, 
for any sufficiently smooth test function $f(\xi)$ ($\Lc_1$-integrable on $[-1,1]$, piecewise continuous, with at most a countably infinite number of discontinuities) 
we have $\lim_{G \rightarrow \infty} \int_{-1}^1 ( \gamma_c(\xi) -  \frac{2}{G} \sum_{i=1}^G \gamma_c(\xi_i) \delta(\xi - \xi_i) ) f(\xi)  d\xi = 0$. 
\hfill $\lozenge$ 
\end{remark}

\begin{figure}[t]
		\centering
		\includegraphics[width=0.5\linewidth]{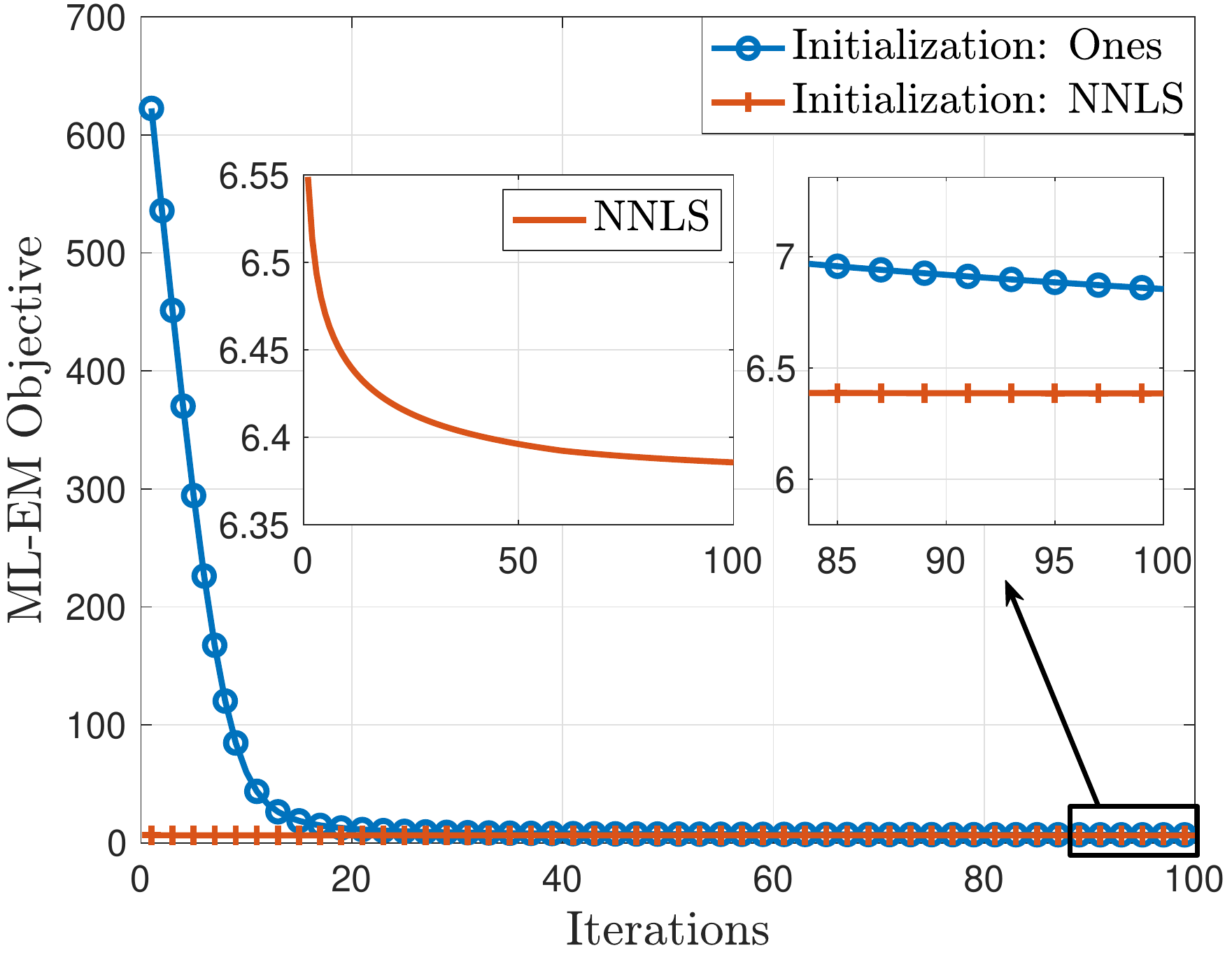}
		\caption{Convergence behaviour of ML-EM algorithm with NNLS and all ones initializations.}
		\label{fig:EM_convergence}
\end{figure}

\section{Numerical Results}\label{sec:simulations}
In this section, the proposed constrained LS and ML-EM based estimators are numerically evaluated and compared with
existing state of the art methods. We use the realistic channel emulator QuaDriGa \cite{jaeckel2014quadriga} to generate the channel samples. We adopt two communication scenarios in QuaDriGa: 3GPP 3D Urban Macro-Cell Line Of Sight (3GPP-3D-UMa-LOS) and 3GPP 3D Urban Macro-Cell Non-Line Of Sight (3GPP-3D-UMa-NLOS). The BS array adopts horizontal ULA with $M=128$ antennas. The carrier frequency of the UL and DL are 1.9 GHz and 2.1 GHz, respectively.  The SNR is set to $10$ dB. The results are averaged over 20 random ASFs and 100 times of random channel realizations for each ASF.

\subsection{Compared Benchmarks}
We compare the proposed algorithms to the following three benchmarks.

\begin{enumerate}
    \item \textit{Toeplitz-PSD Projection}: The first benchmark is an intuitively simple approach. We know that the ULA channel covariance  is a Toeplitz PSD matrix. Thus, we can project the sample covariance onto the space of Toeplitz-PSD by solving the convex optimization problem:
    \begin{align}\label{eq:Toeplitz-PSD}
        \Sigmah^{\text{PSD}} = \argmin_{\Sigmam \in\mathcal{HT} } \; \|\Sigmam - \widehat{\Sigmam}_{\hv}\|^2_\sfF, \quad \text{s.t.} \; \Sigmam\succeq \mathbf{0}.
    \end{align}
    The projected matrix $\Sigmah^{\text{PSD}}$ is the covariance estimate. 
    We use the projections onto convex sets (POCS) algorithm \cite{bauschke1996projection} to solve \eqref{eq:Toeplitz-PSD}. Specifically, the sample covariance matrix is alternately projected onto the convex set of toeplitz and PSD cone. Projecting on toeplitz follows the way in Appendix~\ref{sec:teop_appendix}, and projection on the PSD cone is achieved by eigen-decomposition and setting all negative eigenvalues to zero.  These two projections are repeated till convergence. Note that the complexity of this semi-definite programming problem can be high when the number of antennas $M$ is large. Moreover, toeplitz-PSD can \textit{\textbf{not}} provide UL-DL covariance transformation.

    \item \textit{The SPICE Method}: The second method we use for comparison is known as sparse iterative covariance-based estimation (SPICE) \cite{stoica2011spice}. This method also exploits the ASF domain but can be only applied with Dirac delta dictionaries. Similar to the parametric covariance model with only Dirac delta dictionaries introduced in \eqref{eq:deltaCovariance}, assuming Dirac delta dictionaries $\Dm =[\av (\xi_1),\ldots,\av (\xi_G)]$ of $G$ array response vectors corresponding to $G$ AoAs, and defining $\Sigmam = \Dm \diag(\uv) \Dm^\herm$, the ASF coefficients $\uv$ is estimated by solving the following convex optimization problem for two cases:
    \begin{equation}\label{eq:SPICE}
        \uv^\star = \begin{cases} 
                \underset{\uv\in \bR_+^{G}}{\argmin} \; \left \Vert \Sigmam^{-1/2} \left(\widehat{\Sigmam}_\yv - \Sigmam\right) \right\Vert^2_\sfF, & N < M, \\
                \underset{\uv\in \bR_+^{G}}{\argmin} \; \left \Vert \Sigmam^{-1/2} \left(\widehat{\Sigmam}_\yv - \Sigmam\right)\widehat{\Sigmam}_\yv^{-1/2} \right\Vert^2_\sfF, & N\geq M.
            \end{cases}
    \end{equation}
    The channel covariance estimate is then obtained as $\Sigmam_\hv^{\text{SPICE}} = \Dm \diag(\uv^\star) \Dm^\herm$.

    \item \textit{Convex Projection Method}: This method is proposed in \cite{miretti2018fdd} for the ASF estimation by solving a convex feasibility problem $\widehat{\gamma} = \text{find} \; \gamma, \text{subject to} \; \gamma \in \mathcal{S}$, where
    \begin{equation}
        \mathcal{S} = \left\{ \gamma : \int^1_{-1} \gamma(\xi)e^{j\pi m \xi}d\xi= [\widehat{\boldsymbol{\Sigma}}_{\hv}]_{m,1},\;m=[M],\gamma(\xi)\geq0, \forall \xi \in [-1,1]\right\}.
    \end{equation}
    This can be solved by applying an iterative projection algorithm, which produces a sequence of functions in $L_2$ that converges to a function satisfying the constraint $\gamma \in \mathcal{S}$. Given the estimated ASF $\widehat{\gamma}$, the channel covariance estimation is obtained following   \eqref{eq:cov_mat}.

\end{enumerate}

\subsection{Considered Metrics}
Denoting a generic covariance estimate as $\widehat{\Sigmam}$, we use three error metrics to evaluate the estimation quality:
\begin{enumerate}
    
    \item \textit{Normalized Frobenius-norm Error}: This error is defined as 
	 \begin{equation}
	    E_{\text{NF}} =   \frac{\Vert \Sigmah -  \widehat{\Sigmam} \Vert_\sfF}{\Vert \Sigmah \Vert_\sfF}.
	 \end{equation}
	
	\item \textit{Normalized MSE of Channel Estimation}: Given a noisy channel observation $\yv = \hv + \zv$, the optimal estimation of channel vector $\hv$ is obtained via linear MMSE filter $\widehat{\hv} = \widehat{\Sigmam}(N_0\mathbf{I}_M + \widehat{\Sigmam})^{-1}\yv$. Then, this metric considers the normalized mean squared error (NMSE) of instantaneous channel estimation:
	\begin{equation}
	    E_{\text{NMSE}} = \frac{\EE[\|\hv - \widehat{\hv}\|^2]}{\EE[\|\hv \|^2]}.
	\end{equation}
	By the optimality of linear MMSE estimation for Gaussian random vectors, 
	the lower bound of this error is obtained using the true channel covariance.
	
	\item \textit{Power Efficiency}: This metric evaluates the similarity of dominant subspaces between the estimated and true matrices, which is an important factor in various applications of massive MIMO such as user grouping and group-based beamforming \cite{adhikary2013joint,nam2014joint,khalilsarai2018fdd}. Specifically, let $p\in[M]$ denote a subspace dimension parameter and let  $\Um_p \in \bC^{M\times p}$ and $\widehat{\Um}_p\in \bC^{M\times p}$ be the $p$ dominant eigenvectors of $\Sigmah$ and $\widehat{\Sigmam}$ corresponding to their largest $p$ eigenvalues, respectively. The power efficiency (PE) based on $p$ is defined as
	\begin{equation}
	    E_{\text{PE}}(p) =  1 - \frac{\langle\Sigmah,\widehat{\Um}_p\widehat{\Um}_p^\herm\rangle}{\langle\Sigmah,\Um_p\Um_p^\herm\rangle} =  \frac{\trace\left(\Um_p^\herm\Sigmah\Um_p\right) - \trace\left(\widehat{\Um}_p^\herm\Sigmah\widehat{\Um}_p\right)}{\trace\left(\Um_p^\herm\Sigmah\Um_p\right)}. 
	\end{equation}
	It is noticed that $E_{\text{PE}}(p) \in [0,1]$ and the closer it is to 0, the more power is captured by the estimated  $p$-dominant subspace. 
	$E_{\text{PE}} = 0$ is achieved when $\widehat{\Um}_p = \Um_p$, i.e., perfect estimation. 
	
\end{enumerate}

\subsection{Performance Comparison}
We first provide a performance comparison under only Dirac delta dictionaries with a relatively large number $G$ of grid points. 
Then, we provide a comparison under different dictionaries to show the benefit of properly choosing the dictionary adapted to the case at hand. 

\subsubsection{Comparison under Dirac delta dictionaries}\label{sec:asf-dirac}
We set the number of dictionary functions for the ASF continuous part as $G=2M$. 
SPICE is also applied to the same picket-fence dictionary without knowledge of the spike locations, 
since this is a feature of our own method and not intrinsic in the SPICE algorithm. 
The projection method is not a dictionary based method. 
To implement it, we discretize the ASF domain with 5000 grid points to approximate the continue ASF.   
\newcommand{\plotwidth}{0.31}
\begin{figure*}[t]
	\centering
	\begin{subfigure}[b]{\plotwidth\textwidth}
		\includegraphics[width=\textwidth]{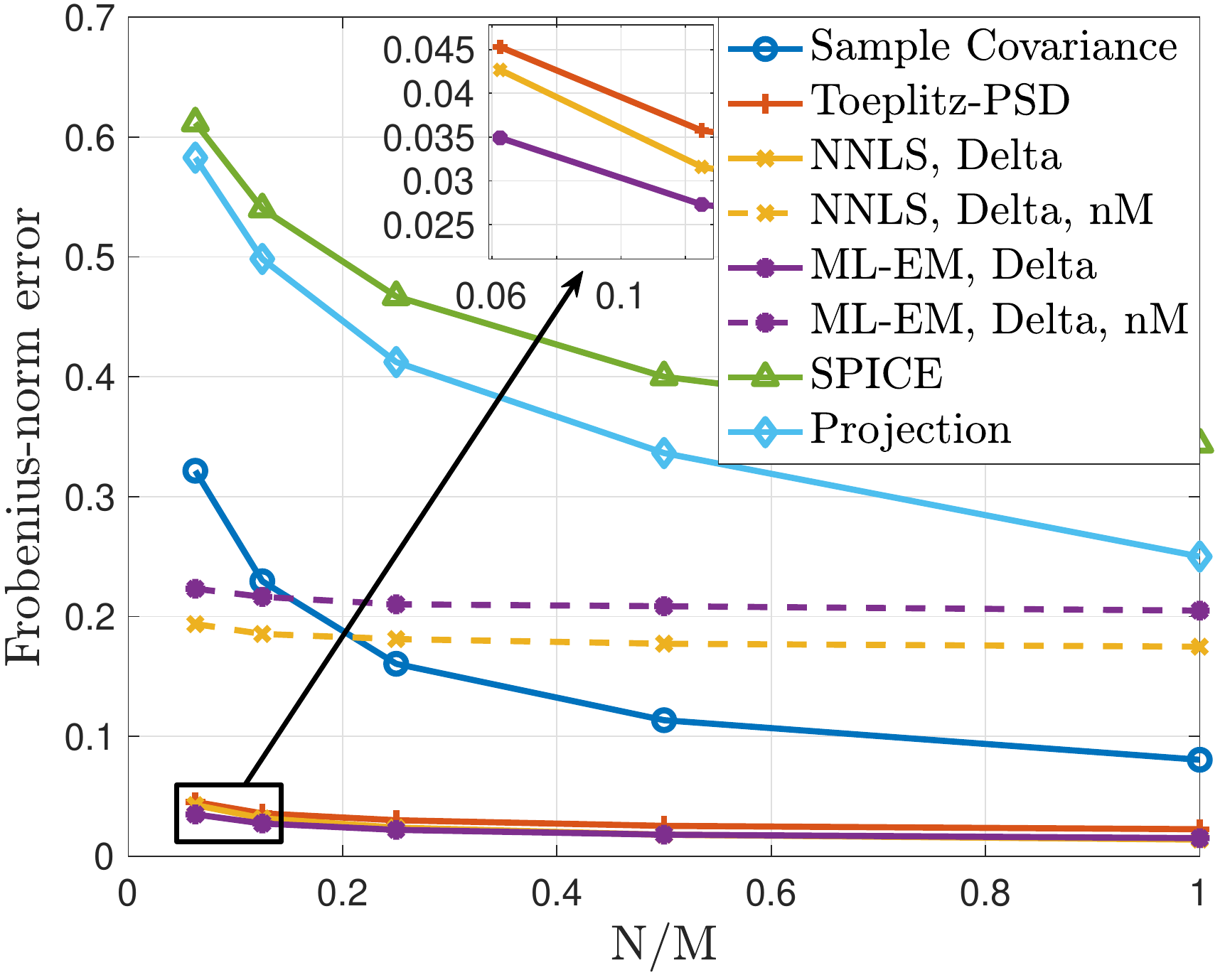}
		\caption{UL Frobenius-norm error}
		\label{fig:fro}
	\end{subfigure}
	~ 
	\begin{subfigure}[b]{\plotwidth\textwidth}
		\includegraphics[width=\textwidth]{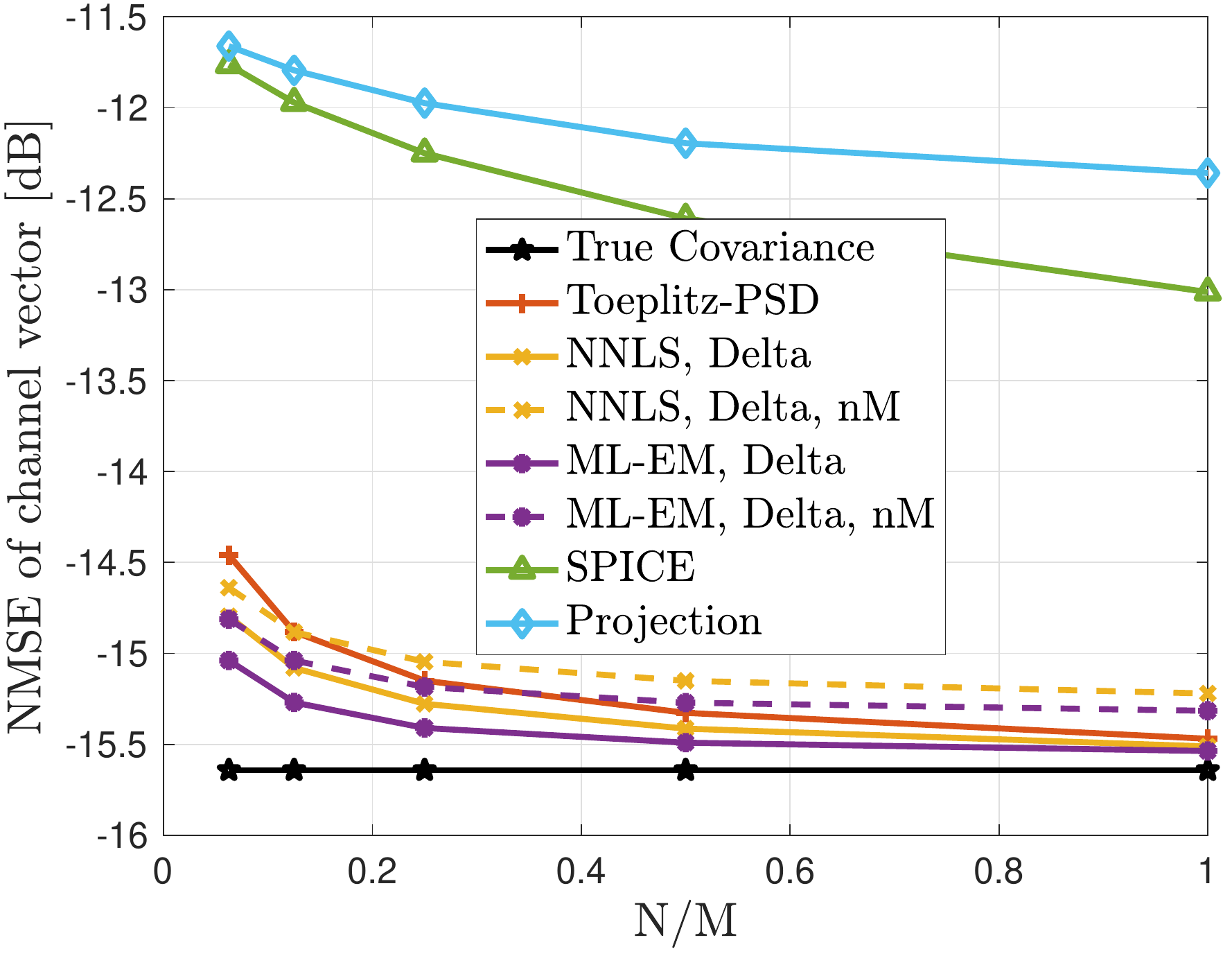}
		\caption{UL MSE of channel estimation}
		\label{fig:MSE}
	\end{subfigure}
	~
	\begin{subfigure}[b]{\plotwidth\textwidth}
		\includegraphics[width=\textwidth]{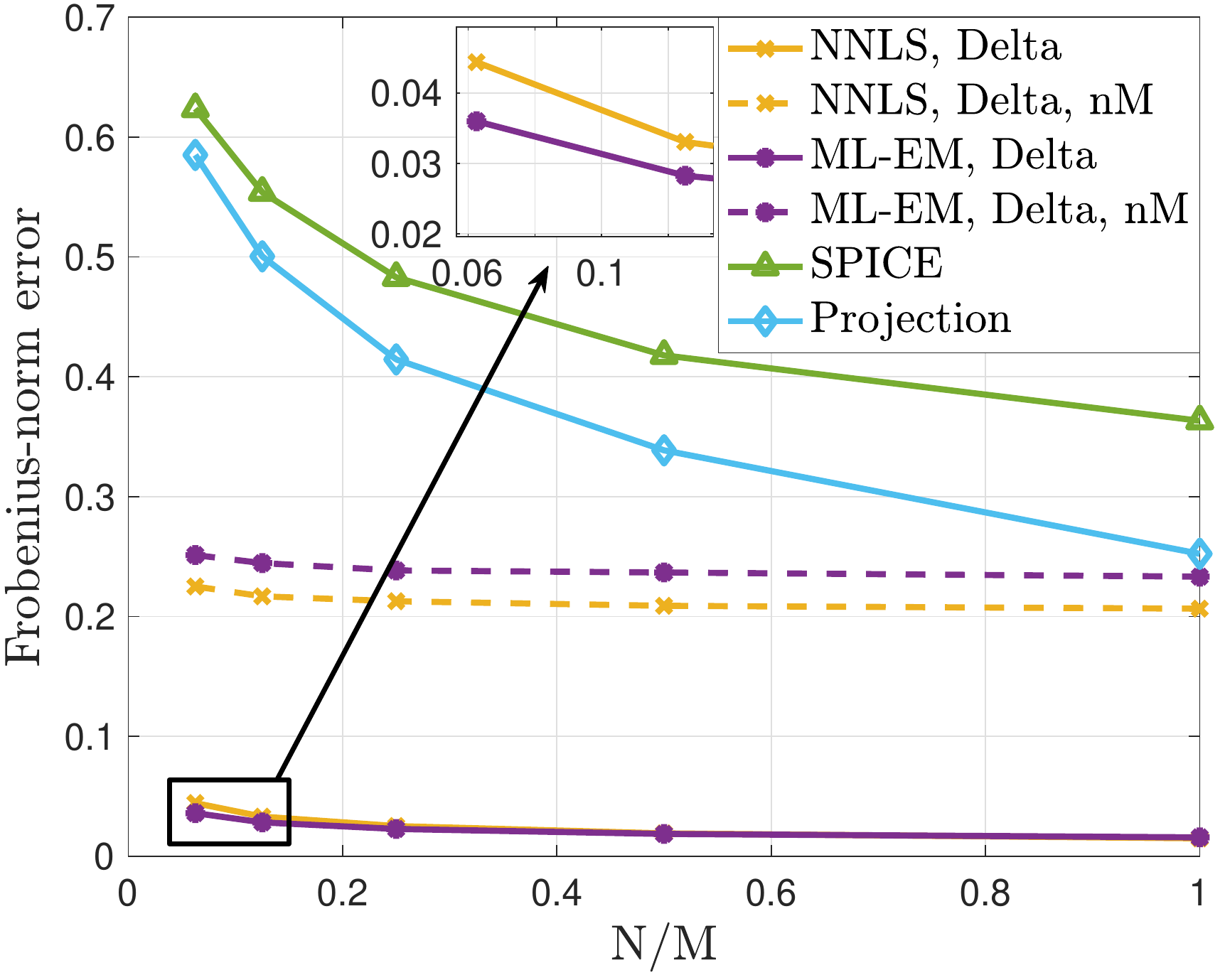}
		\caption{DL Frobenius-norm error}
		\label{fig:fro_dl}
	\end{subfigure}
	~
	\begin{subfigure}[b]{\plotwidth\textwidth}
		\includegraphics[width=\textwidth]{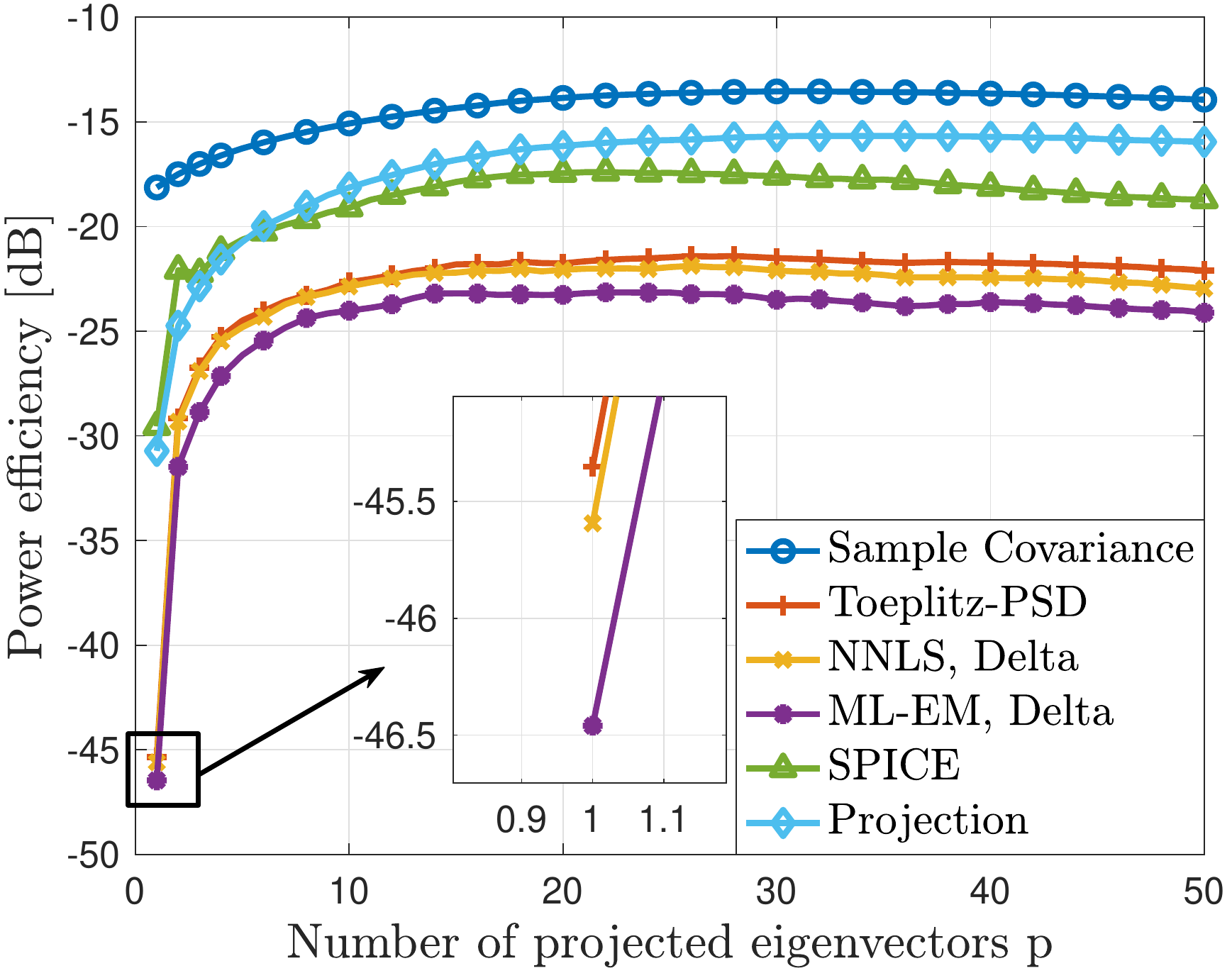}
		\caption{UL PE with $N/M=0.125$}
		\label{fig:PE1}
	\end{subfigure}
	~
	\begin{subfigure}[b]{\plotwidth\textwidth}
		\includegraphics[width=\textwidth]{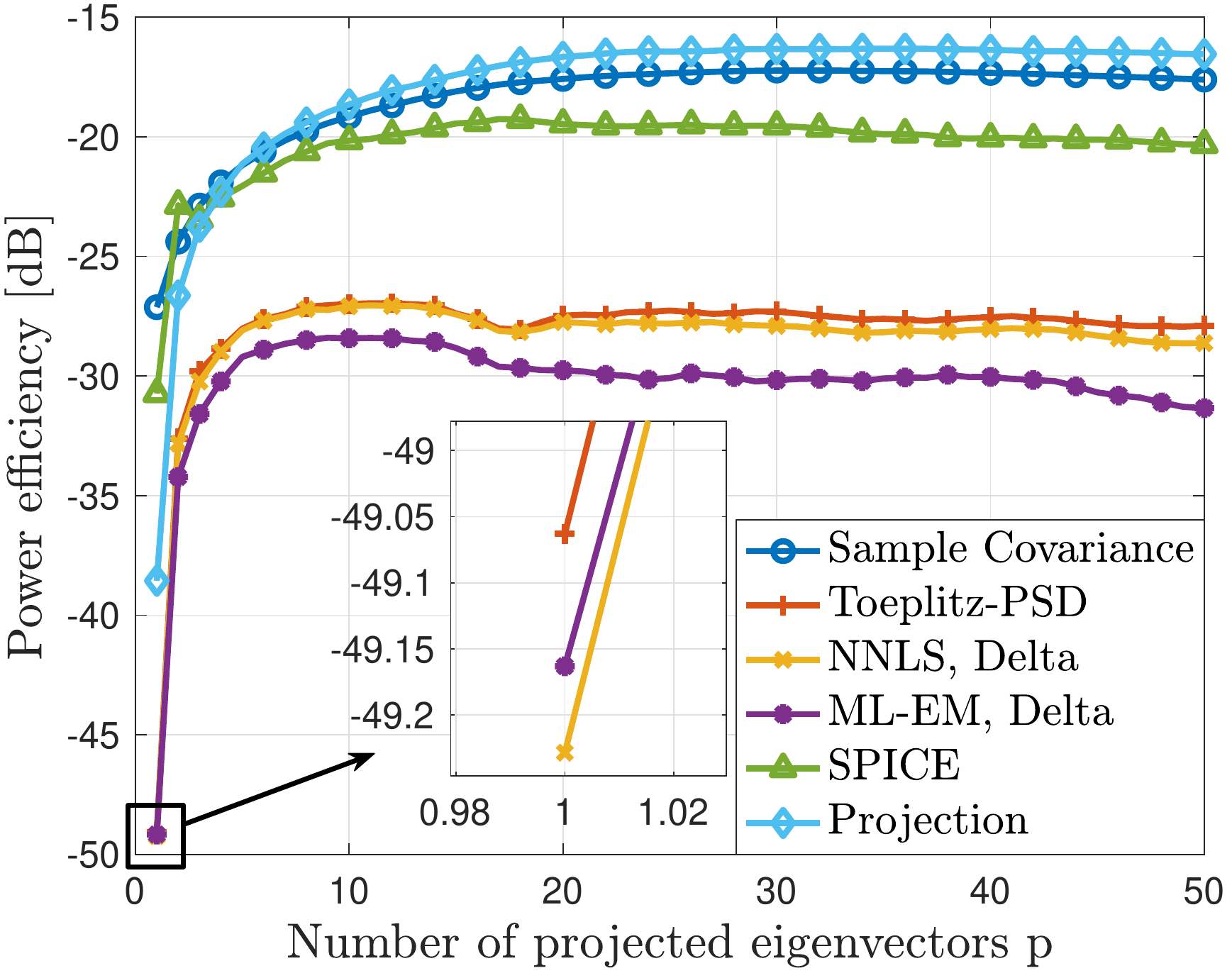}
		\caption{UL PE with $N/M=0.5$}
		\label{fig:PE2}
	\end{subfigure}
	~
	\begin{subfigure}[b]{\plotwidth\textwidth}
		\includegraphics[width=\textwidth]{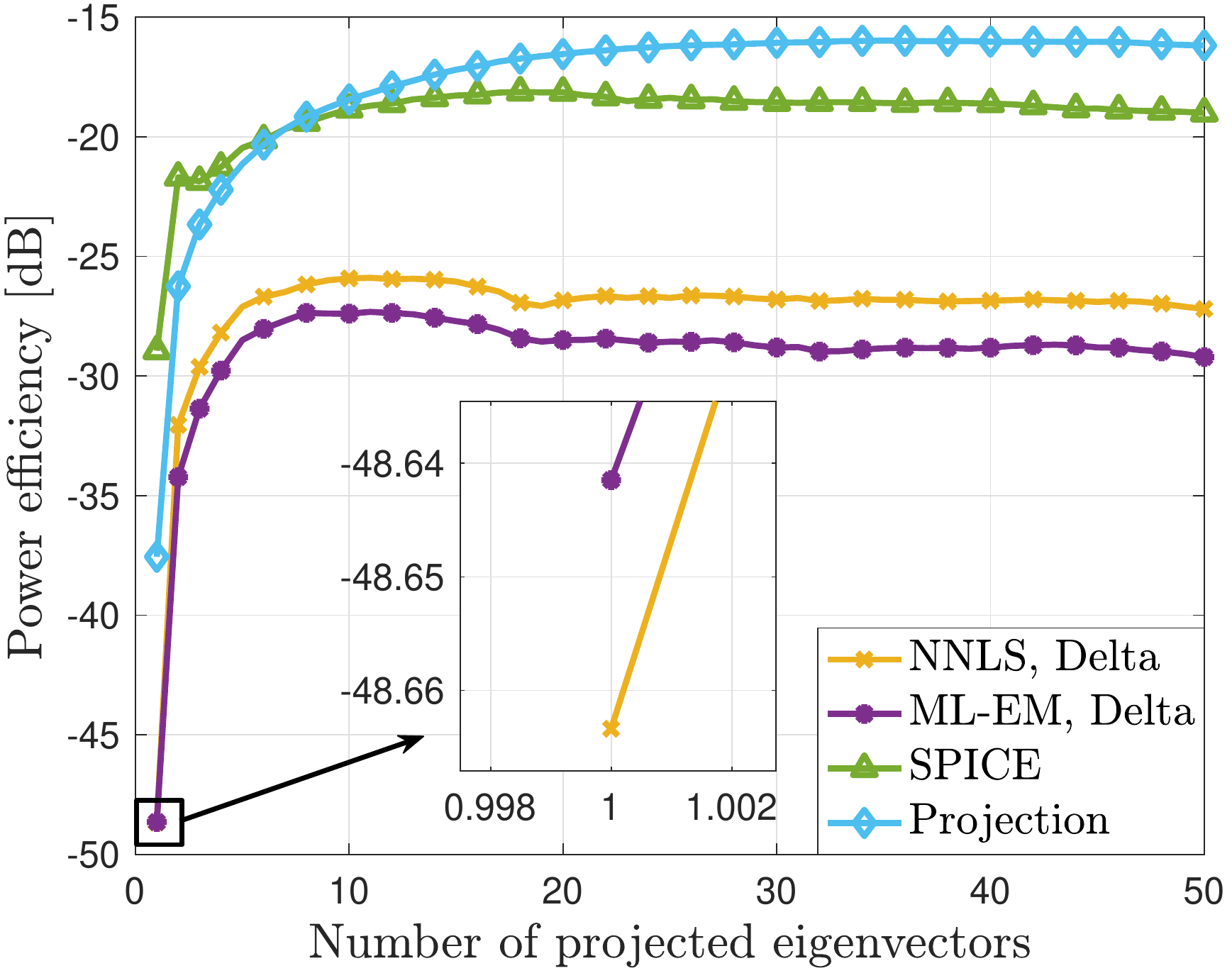}
		\caption{DL PE with $N/M=0.5$}
		\label{fig:PE_dl}
	\end{subfigure}
	\caption{ Covariance estimation quality comparison for scenario 3GPP-3D-UMa-LOS.} \label{fig:spike_err}
\end{figure*} 

\begin{figure*}[t]
	\centering
	\begin{subfigure}[b]{\plotwidth\textwidth}
		\includegraphics[width=\textwidth]{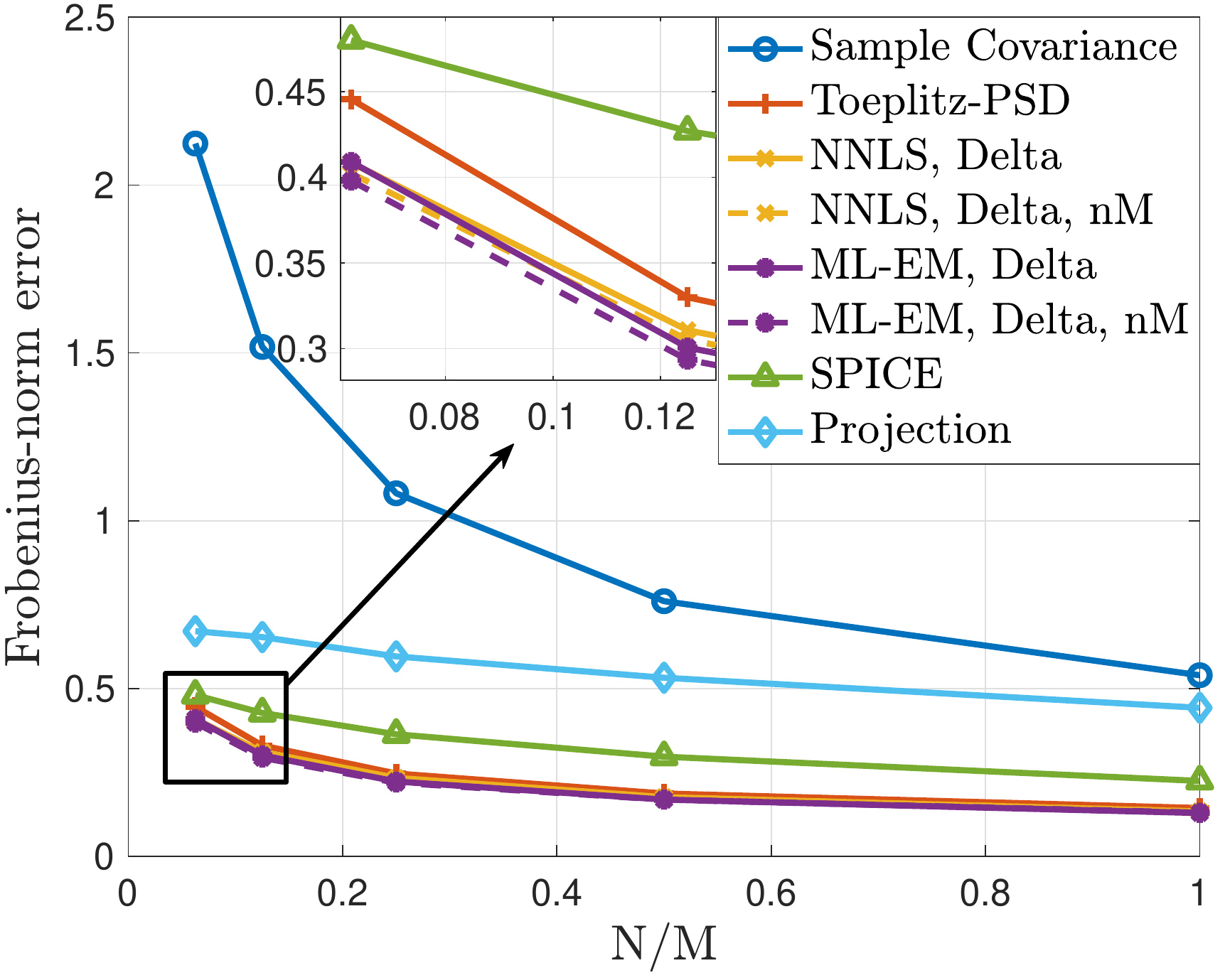}
		\caption{UL Frobenius-norm error}
		\label{fig:NLOS-fro}
	\end{subfigure}
	~ 
	\begin{subfigure}[b]{\plotwidth\textwidth}
		\includegraphics[width=\textwidth]{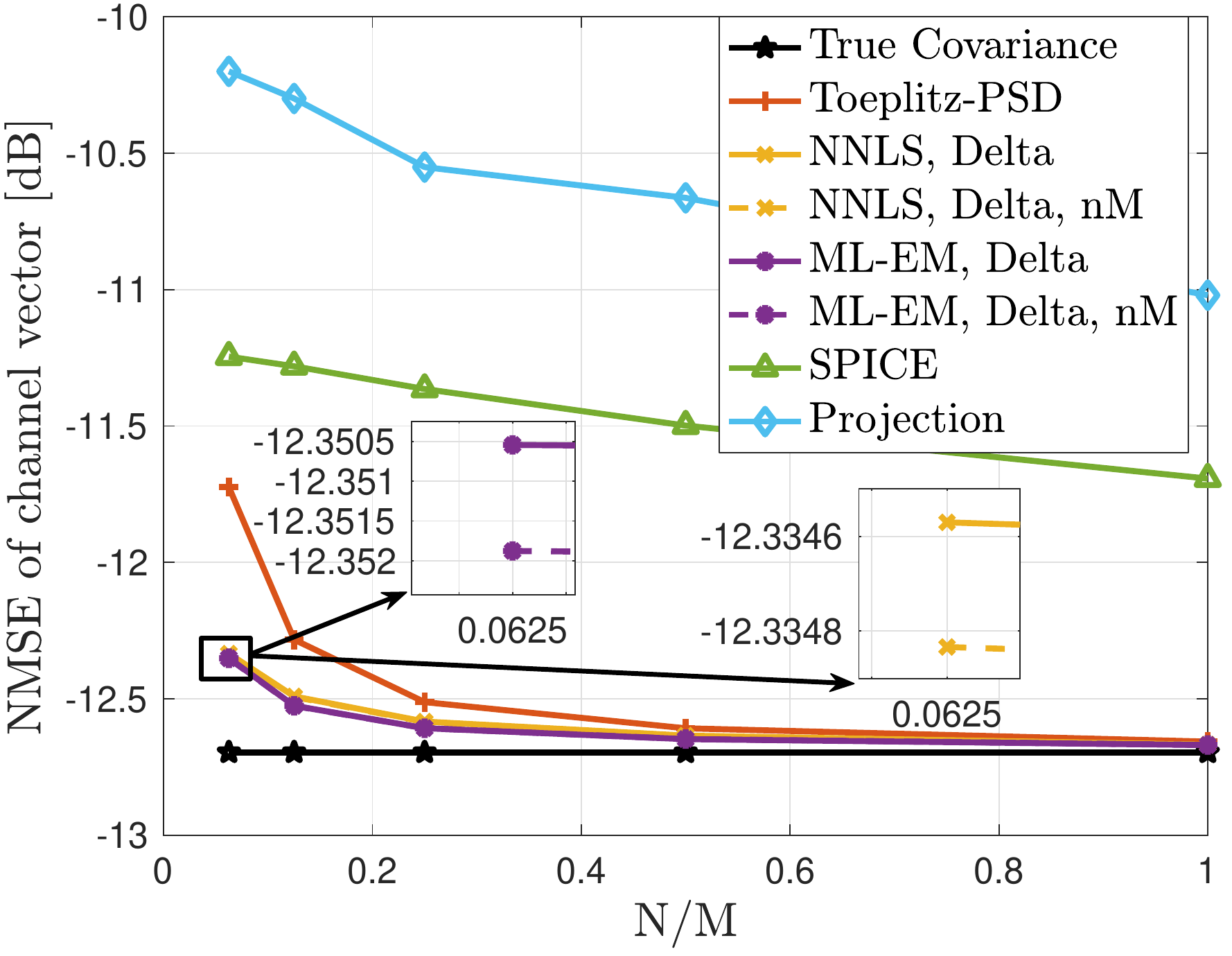}
		\caption{UL MSE of channel estimation}
		\label{fig:NLOS-MSE}
	\end{subfigure}
	~
	\begin{subfigure}[b]{\plotwidth\textwidth}
		\includegraphics[width=\textwidth]{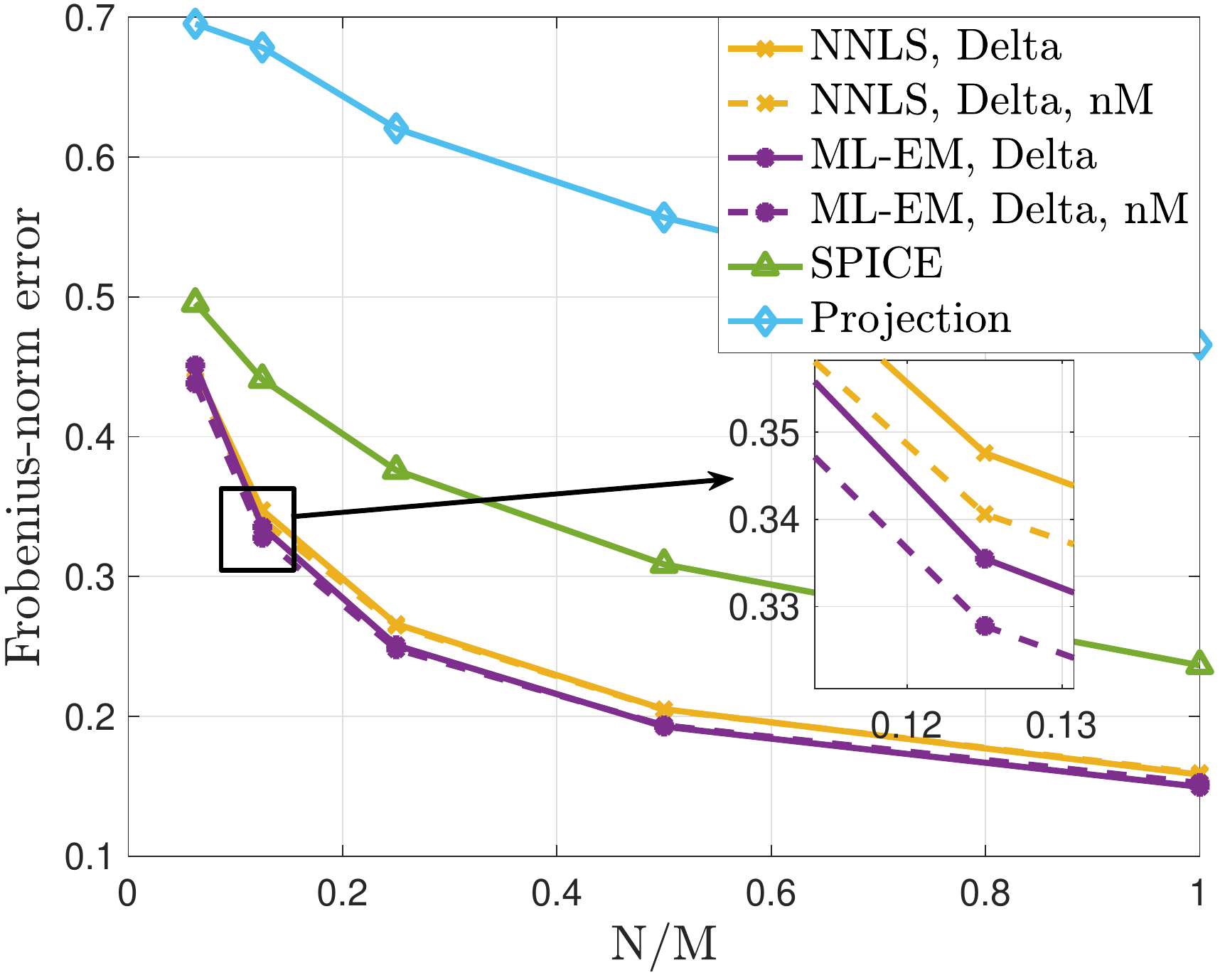}
		\caption{DL Frobenius-norm error}
		\label{fig:NLOS-fro_dl}
	\end{subfigure}
	~
	\begin{subfigure}[b]{\plotwidth\textwidth}
		\includegraphics[width=\textwidth]{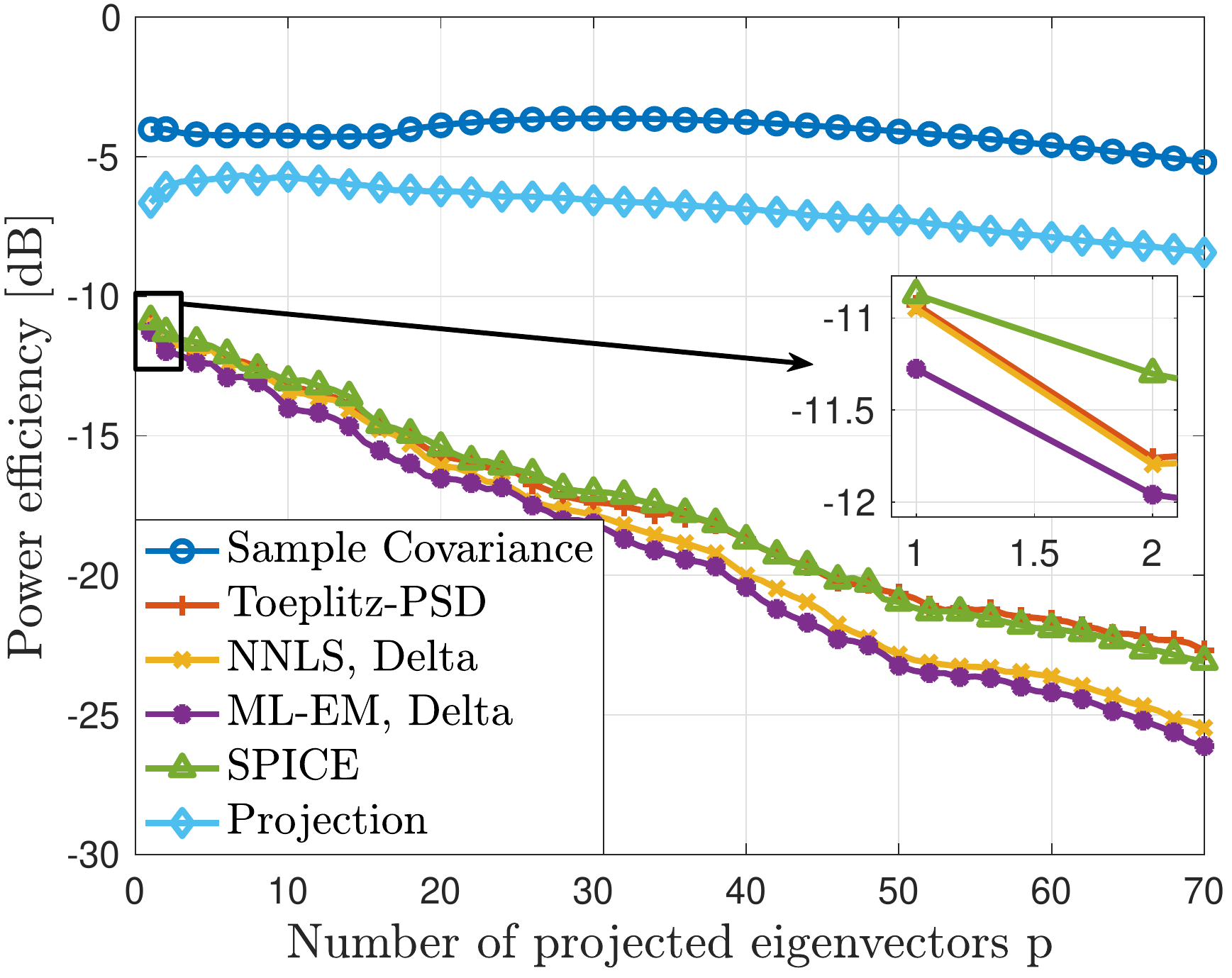}
		\caption{UL PE with $N/M=0.125$}
		\label{fig:NLOS-PE1}
	\end{subfigure}
	~
	\begin{subfigure}[b]{\plotwidth\textwidth}
		\includegraphics[width=\textwidth]{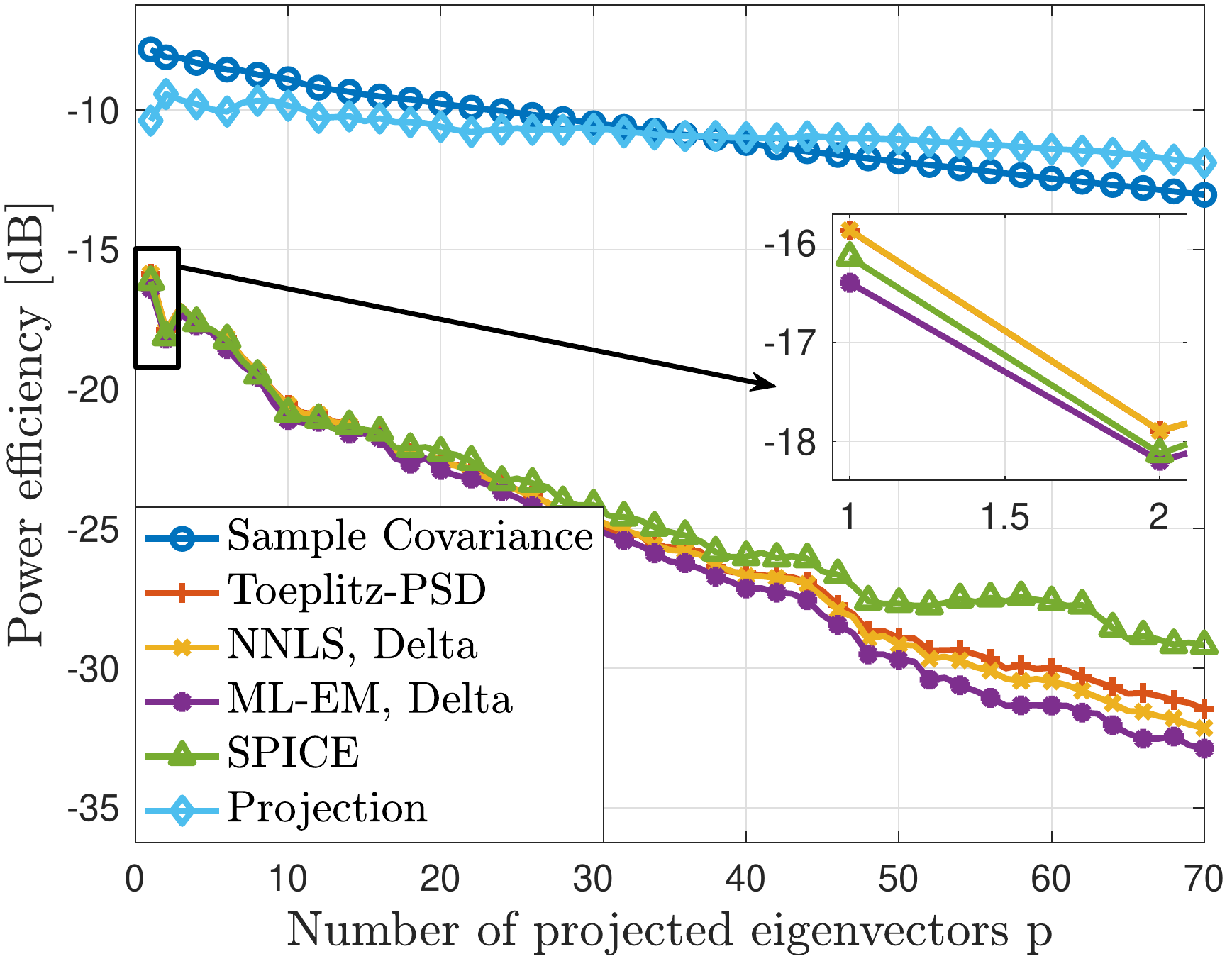}
		\caption{UL PE with $N/M=0.5$}
		\label{fig:NLOS-PE2}
	\end{subfigure}
	~
	\begin{subfigure}[b]{\plotwidth\textwidth}
		\includegraphics[width=\textwidth]{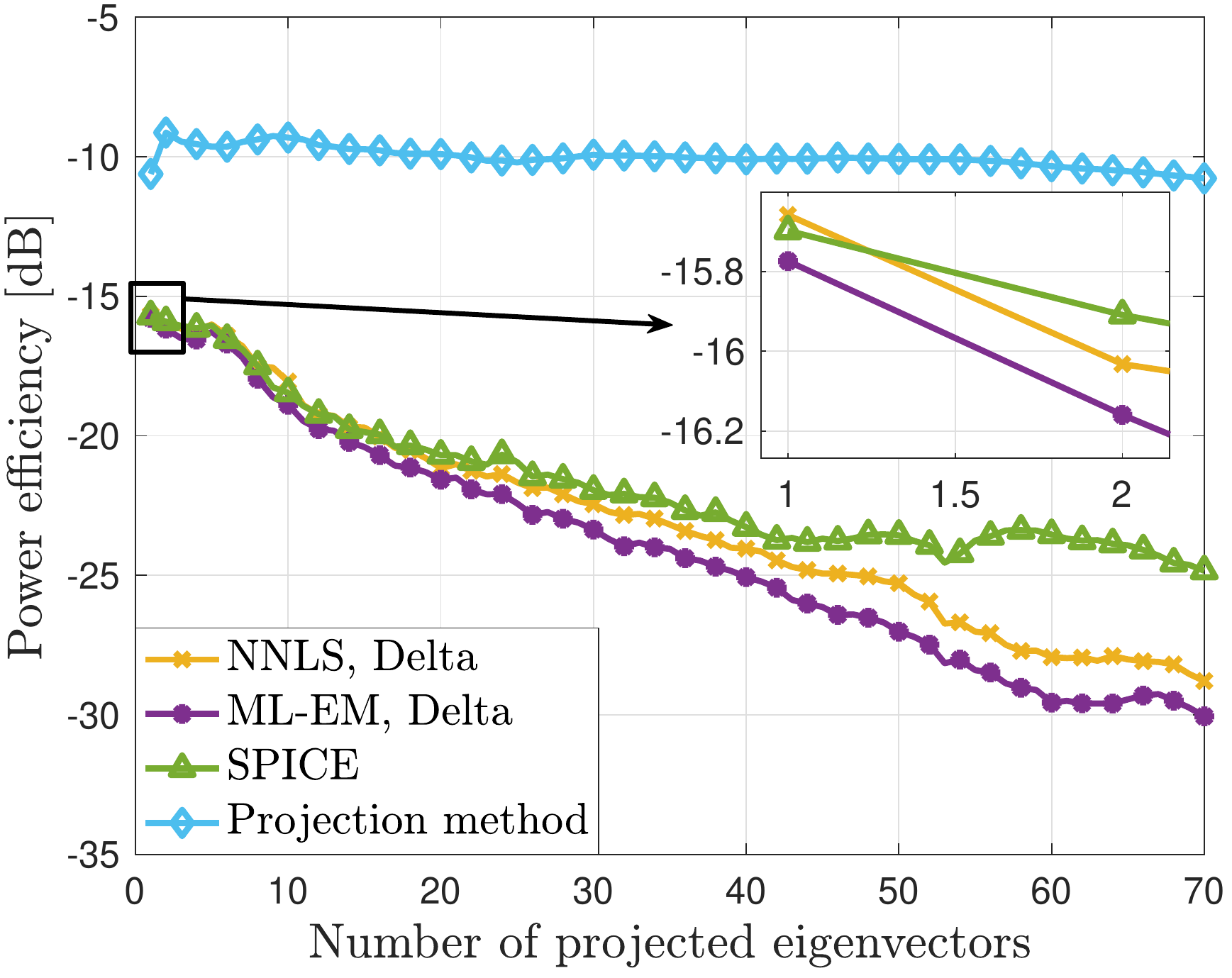}
		\caption{DL PE with $N/M=0.5$}
		\label{fig:NLOS-PE_dl}
	\end{subfigure}
	\caption{ Covariance estimation quality comparison for scenario 3GPP-3D-UMa-NLOS.} \label{fig:NLOS-spike_err}
\end{figure*} 

In Fig.~\ref{fig:spike_err}, the UL and DL channel covariance estimation error for scenario 3GPP-3D-UMa-LOS in terms of the normalized Frobenius-norm error, normalized MSE of channel estimation and power efficiency under different sample ratios $N/M$ (from 0.0625 to 1) are depicted. 
It is observed that the results of the proposed NNLS and ML methods significantly outperform the other benchmarks in both UL and DL for all metrics under all range of sample size. Although the Toeplitz-PSD method has very similar Frobenius-norm errors compared to our methods, it performs significantly worse than our methods in terms of channel estimation error especially under extremely small sample size. Moreover, it is worth to emphasize again that the Toeplitz-PSD method does not give directly an easy way to perform the UL-DL covariance transformation. In contrast (see Fig.~\ref{fig:fro_dl}), the proposed methods yields very good UL-DL transformation (according to the considered metrics) even under very small sample size. 
Furthermore, we also present the results of NNLS and ML-EM without MUSIC (i.e., without explicit spike location estimation) denoted as 'NNLS, Delta, nM' and 'ML-EM, Delta, nM' in Fig.~\ref{fig:fro} and \ref{fig:MSE}. It is observed that in this case the performances degrade dramatically. 
This indicates that the proposed MUSIC step is necessary and non-trivially improves the overall performance. 
It is also noticed that our results without MUSIC are still much better than the results of SPICE and projection method, which shows the advantage of the proposed NNLS and ML-EM algorithms.

Results for the NLOS scenario 3GPP-3D-UMa-NLOS are depicted in Fig.~\ref{fig:NLOS-spike_err}. First, we observe again that our methods outperform the other benchmarks. Interestingly, the results with MUSIC perform almost the same as the results without MUSIC. 
Notice that in this case the scheme is unaware of the fact that there are no spikes, and the MDL/MUSIC step may give some spurious spikes, which are then
eliminated by the subsequent NNLS and ML/EM step (estimated coefficients near zero). 
This demonstrates the fact that the proposed method is robust to both LOS and NLOS cases.

\subsubsection{Comparison with Dirac delta and overlapping Gaussian dictionaries}\label{sec:dictionary}

In this part, we show the results based on Dirac delta and overlapping Gaussian dictionaries to indicate the importance of finding the proper dictionary. We set a fixed $N = 16$, i.e., $N/M = 0.125$. We test the NNLS algorithm with Dirac delta and Gaussian dictionaries as well as the QP estimator with Gaussian dictionaries with $\widetilde{G}=10000$ under different number of dictionaries ($G/M$ is from 0.0625 to 2) for the NLOS scenario 3GPP-3D-UMa-NLOS. The overlapping Gaussian dictionary functions are defined as follows. Given $G$, let $\widetilde{\psi}(\xi)$ be a Gaussian density whose support is limited to $[-\frac{4}{G+3}, \frac{4}{G+3}]$. 
The dictionary $\{\psi_i(\xi): i \in [G]\}$ consists of skewed shifted versions of $\widetilde{\psi}(\xi)$, i.e., $\psi_i(\xi) = J(\xi) \widetilde{\psi}(\xi+1-\frac{2(i+1)}{G+3}),i
\in[G]$, where $J(\xi) = \frac{1}{\sqrt{1-\xi^2}}$ is a factor due to coordinate transformation from $\theta$ to $\xi$. Specifically, $\widetilde{\psi}(\xi)$ is given by
\begin{equation}
    \widetilde{\psi}(\xi) = \begin{cases} 
       \frac{a_0}{\sigma\sqrt{2\pi}} \text{exp}\left(-\frac{(\xi-\mu)^2}{2\sigma^2}\right), &  \; \xi \in [-\frac{4}{G+3}, \frac{4}{G+3}] \\
      0, & \;\text{otherwise}
   \end{cases},
\end{equation}
where $\mu = 0$ and $\sigma = \frac{4}{3(G+3)}$ to ensure that the truncated interval $[-\frac{4}{G+3}, \frac{4}{G+3}]$ accounts for $6\sigma$ of the Gaussian function, and $a_0$ is a normalization scalar such that $\int^1_{-1}\widetilde{\psi}(\xi)d \xi = 1$.

The results are shown in Fig.~\ref{fig:kernel}. From Fig.~\ref{fig:kernel-fro} and \ref{fig:kernel-MSE} it is observed that the results under Gaussian dictionaries are much better than the results under Dirac delta dictionaries when the number of dictionaries is small, e.g., $G/M \leq 1$. It is also observed that both Frobenius-norm error and  channel estimate MSE of NNLS and SPICE under Dirac delta dictionaries decrease dramatically as $G$ increases. In contrast, 
the results under Gaussian dictionaries become slightly worsen when $G/M > 0.5$. 
As $G$ becomes large, the results with Gaussian dictionary functions converge to the Dirac delta case. This is of course explained by the fact that 
the thin Gaussian density with normalized integral are more and more similar to Dirac delta functions. 
A similar behaviour is observed in Fig.~\ref{fig:kernel-PE05} and \ref{fig:kernel-PE15}. These results indicate that by choosing some template dictionary function 
may achieve advantages of the quantization of the AoA domain compared to Dirac delta function for small size of the dictionary, which reduces significantly 
the dimension and computational complexity.
\newcommand{\plotwidthgau}{0.45}
\begin{figure*}[t]
	\centering
	\begin{subfigure}[b]{\plotwidthgau\textwidth}
		\includegraphics[width=\textwidth]{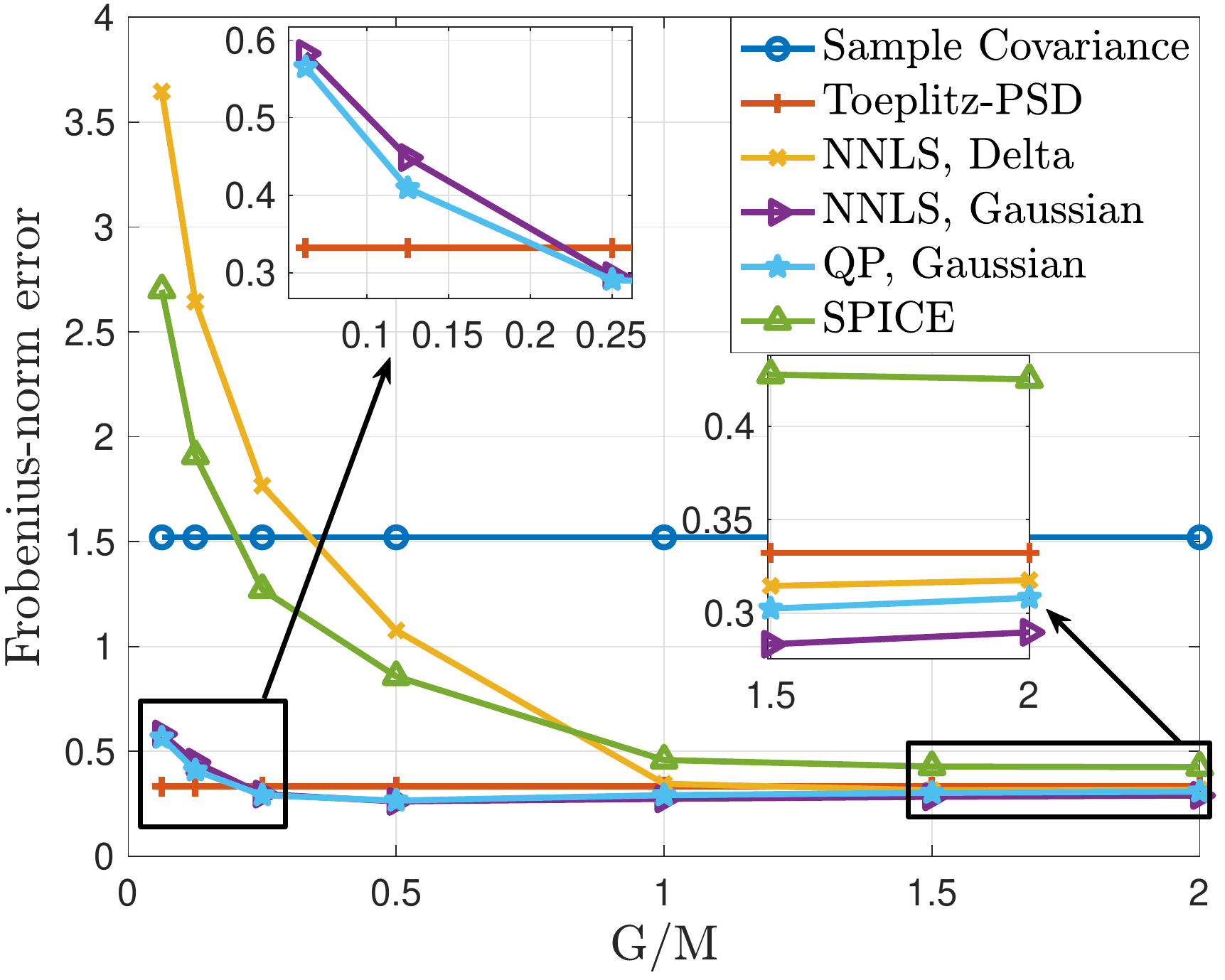}
		\caption{Frobenius-norm error}
		\label{fig:kernel-fro}
	\end{subfigure}
	~ 
	\begin{subfigure}[b]{\plotwidthgau\textwidth}
		\includegraphics[width=\textwidth]{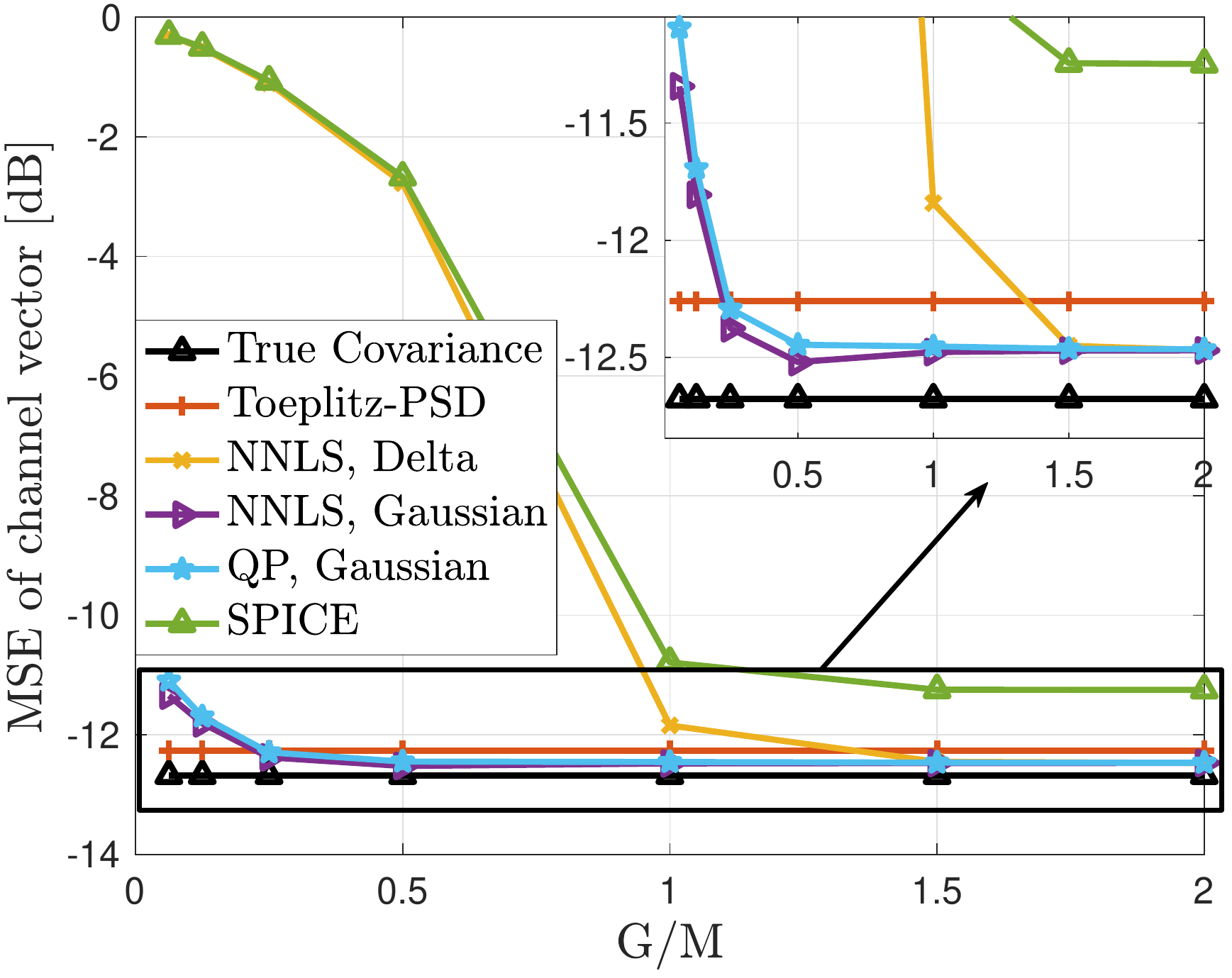}
		\caption{NMSE of channel}
		\label{fig:kernel-MSE}
	\end{subfigure}
	~
	\begin{subfigure}[b]{\plotwidthgau\textwidth}
		\includegraphics[width=\textwidth]{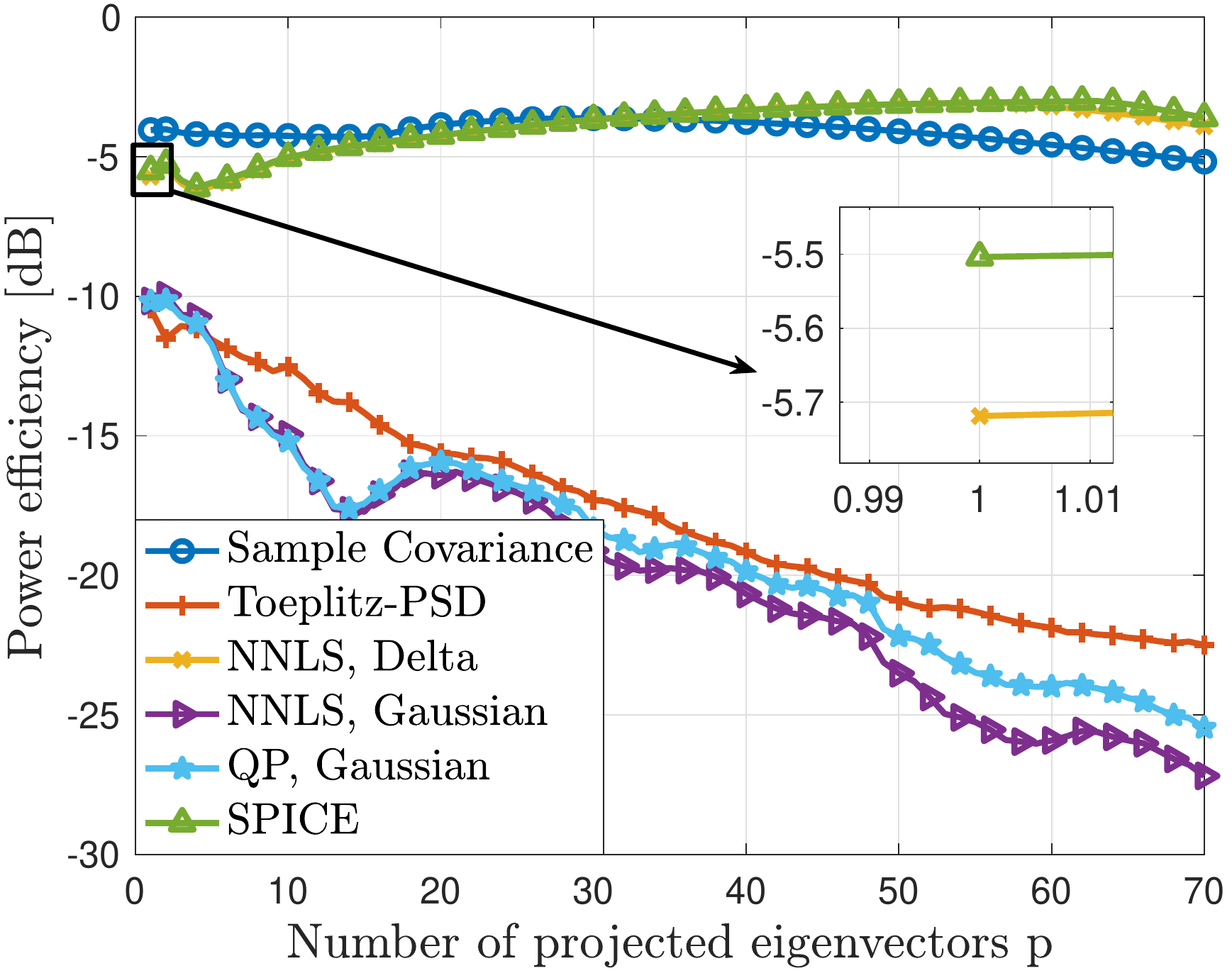}
		\caption{PE with $G/M=0.5$}
	    \label{fig:kernel-PE05}
	\end{subfigure}
	~ 
	\begin{subfigure}[b]{\plotwidthgau\textwidth}
		\includegraphics[width=\textwidth]{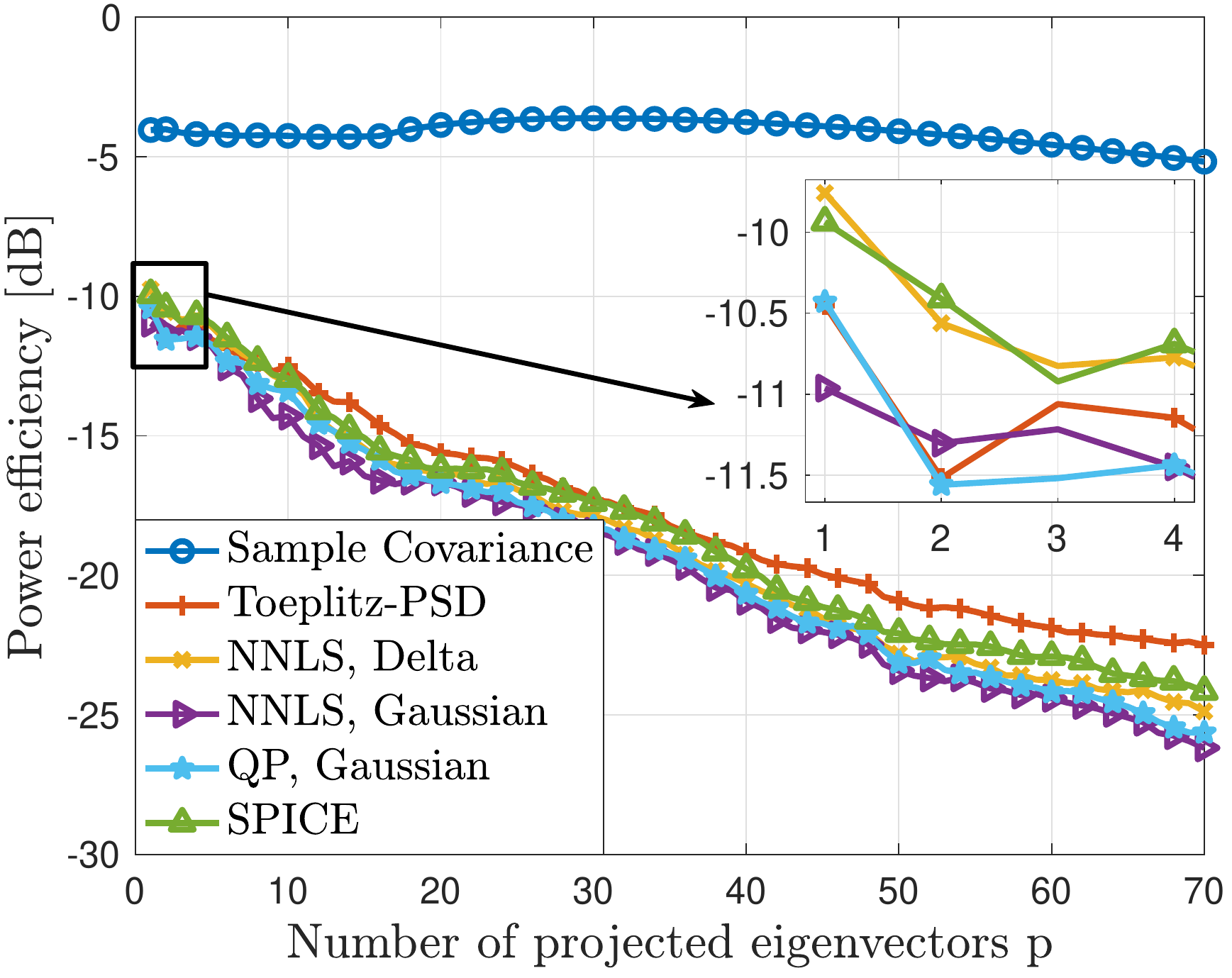}
		\caption{PE with $G/M=1.5$}
		\label{fig:kernel-PE15}
	\end{subfigure}
	
	\caption{Comparison with Dirac delta and Gaussian dictionaries with $N/M=0.125$ under various $G/M$ for scenario 3GPP-3D-UMa-NLOS.} \label{fig:kernel}
\end{figure*}

\section{Conclusion} \label{sec:conclusion}
In this work, we addressed the problem of estimating the covariance matrix of the channel vector from a set of noisy UL pilot observations in massive MIMO systems. By modeling the ASF of the channel as a parametric representation in the angle domain, we proposed the NNLS, QP and ML-EM based estimators to obtain the model parameters.  In order to find the discrete scattering components (number and location of the spikes in the ASF), we adopted MDL and MUSIC methods. Theoretical results guarantee that the separation of the spikes with respect to the clutter of eigenvalues due to the diffuse scattering components is large, for a large number of antennas. This yields that the spikes support estimation in the massive MIMO regime is very reliable. In addition, our method estimates both the coefficients of the spikes and the coefficients of the diffuse part of the ASF jointly. Extensive numerical simulations based on realistic channel emulator QuaDriGa under 3GPP communication scenarios show that the proposed methods are superior to several state-of-the-art algorithms  in the literature in terms of different performance metrics, especially for a small number of samples, which is particularly relevant for the massive MIMO application. 

\appendices 

\section{Consistency of MUSIC}\label{sec:MUSIC_analysis}
We provide an analysis of the consistency of proposed MUSIC algorithm based on the results in \cite{najim2016statistical}. A more challenging scaling regime in  \cite{najim2016statistical} was studied, where the amplitude of the spikes decreases by increasing $M$ such that identifying them becomes more difficult. Specifically, recall in \eqref{eq:gamma_decomp} that the ASF $\gamma(\xi)$ is decomposed into discrete part $\gamma_d(\xi) = \sum^r_{i=1}c_i\delta(\xi-\phi_i)$  and continuous parts $\gamma_c(\xi)$. In the pessimistic scaling regime studied in \cite{najim2016statistical}, the coefficients of spikes $\{c_i^{(M)}: i\in [r]\}$ scale with $M$ according to $c_i^{(M)}=\frac{\upsilon_i}{M}$, where $\{\upsilon_i:i \in [r]\}$ are positive constants encoding the relative strength of the spikes. The result in \cite{najim2016statistical} shows that when the number of antennas $M$ and the sample size $N$ both approach to infinity at the same rate such that $\frac{M}{N} \to \zeta > 0 $, where $\zeta$ is a finite constant, if the separation condition $ \min_{i\in[r]}\upsilon_i > \omega_0(\zeta)$ holds, where $\omega_0(\zeta) \geq \|\gamma_c\|_\infty  +N_0$ is a monotonically increasing function of $\zeta$, then MUSIC is asymptotically consistent, i.e., $M(\widehat{\phi}_i - \phi_i) \xrightarrow[M\to\infty]{\text{a.s.}} 0, \forall i \in [r]$, where a.s. stands for almost surely. In our wireless communication scenario, the spike coefficients $\{c_i: i\in[r]\}$ remain the same independent of the number of BS antennas $M$. We mimic this by assuming that the coefficients $\{\upsilon_i\}$ are also growing proportionally to $M$ like $\upsilon_i = M c_i$. Then, the separation condition would be satisfied for any finite $\zeta$ and for any practically relevant $\gamma_c$ provided that $M$ is sufficiently large. It is worthwhile to emphasize that this result implies that no matter how small the spike amplitudes $\{c_i: i \in [r]\}$ are and no matter how small the number of  samples $N$ is compared with $M$ (of course provided that the asymptotic sampling ratio $\zeta$ remains finite), MUSIC will be able to recover all the spikes if $M$ is sufficiently large. This provides a strong argument in favor of using  MUSIC as the model order and spike location estimation for our application.

\section{Toeplitzation of the Sample Covariance Matrix}\label{sec:teop_appendix}
The Toeplitzed matrix $\widetilde{\Sigmam}_{\hv}$ is obtained by solving the orthogonal projection problem
\begin{equation}\label{eq:toep_projection}
    \widetilde{\Sigmam}_{\hv} = \argmin_{\Sigmam}\; \|\Sigmam - \widehat{\Sigmam}_{\hv} \|^2_{\sfF}, \quad \text{s.t.} \; \Sigmam \text{ is Hermitian Toeplitz}. 
\end{equation}
Define $\sigmav\in\CC^{M\times1}$ as the first column of $\Sigmam$ and  $\mathcal{P}$ as the operation that projects $\boldsymbol{\sigma}$ and its conjugate to a Hermitian Toeplitz matrix as $\Sigmam = \mathcal{P}(\boldsymbol{\sigma})$. We further define 
$
\mathcal{K}_i = \{(r,c): r\leq c \; \text{with} \; [\Sigmam]_{r,c} = [\boldsymbol{\sigma}]_i\}, i\in [M]
$
as the set of all those indices in $\Sigmam$, in which the $i$-th variable $[\boldsymbol{\sigma}]_i$ appears. Then, the objective function in \eqref{eq:toep_projection} can be equivalently presented as
\begin{align}\label{eq:bttb_sigma}
    f(\boldsymbol{\sigma}) &= \underset{\boldsymbol{\sigma}}{\min} \;\; \|\mathcal{P}(\boldsymbol{\sigma}) - \widehat{\Sigmam}_{\hv} \|^2_\sfF=  \underset{\boldsymbol{\sigma}}{\min} \;\; \sum_{i=1}^{M} \sum_{(r,c)\in \mathcal{K}_i}\left|[\boldsymbol{\sigma}]_i - [\widehat{\Sigmam}_{\hv}]_{r,c}\right|^2.
\end{align}
By setting the derivative of $[\boldsymbol{\sigma}]_i$ to zero, the optimum $\sigmav^\star$ is achieved as 
\begin{align}
    \frac{\partial f(\boldsymbol{\sigma})}{\partial[\boldsymbol{\sigma}]_i} &= \sum_{(r,c)\in \mathcal{K}_i} 2([\boldsymbol{\sigma}]_i - [\widehat{\Sigmam}_{\hv}]_{r,c}) \stackrel{!}{=} 0, \quad \forall i \in [M], \\
     \Rightarrow  \quad\quad [\boldsymbol\sigma^{\star}]_i &= \frac{\sum_{(r,c)\in \mathcal{K}_i} [\widehat{\Sigmam}_{\hv}]_{r,c}}{|\mathcal{K}_i|}, \quad \forall i \in [M],
\end{align}
where $|\mathcal{K}_i|$ denotes the number of elements in $\mathcal{K}_i$. Correspondingly, $\widetilde{\Sigmam}_{\hv}=\mathcal{P}(\sigmav^\star)$.

\section{Computational Complexity}  \label{appendix-computational-complexity}
We briefly provide an analysis of the floating-point operations (FLOP) based computational complexity for proposed NNLS and ML-EM algorithms. We adopt the result of FLOP counting for different operations summarized in \cite{hunger2005floating}.\footnote{Although in \cite{hunger2005floating} a FLOP is assumed to be either a complex multiplication or complex summation, we only consider the complex multiplications since in practice the summation operations can be further optimized so that its complexity is much less than the complexity of multiplications.} 
The matrix inverse operation for a positive definite matrix is based on Cholesky decomposition. 

\subsection{Complexity of NNLS algorithm}
In general the algorithms tackling NNLS can be roughly divided into {\em active-set} and {\em projected gradient descent} approaches \cite{chen2010nonnegativit}. The projected gradient descent approaches are known to require a very large number of iterations to converge to the optimum due to very small step size around the optimality to guaranty the convergence \cite{bertsekas2015convex}. As an example, we have numerically tested the sequential coordinate-wise algorithm \cite{franc2005sequential}, which has a very simple update rule but unacceptable large required number of iterations to converge. Therefore, we adopt the standard active-set method \cite{lawson1974solving}, which is based on the observation that only a small subset of constraints are usually active at the solution. Specifically, considering the NNLS problem $\xv^\star = \argmin_{\xv\geq0}\;\|\Am\xv-\yv\|$, the active-set method consists of two nested loops, where we count FLOP of the least squares $[(\Am^P)^\transp\Am^P]^{-1}(\Am^P)^\transp\yv$ in each inner and outer loop as well as the expression $\Am^\transp(\yv-\Am\xv)$ in each outer loop, where $\Am^P$ is a matrix associated with only the variables currently in the active set $P$, please see \cite{lawson1974solving} for more details. The input dimensions of our NNLS problem in \eqref{nnls_toep} are $\Am \in \mathbb{R}^{2M\times\widehat{G}}$ and $\yv\in\mathbb{R}^{2M\times1}$, where $\widehat{G} = G + \widehat{r}$ and $2M$ is due to the complex variables. Assuming the total required number of outer loops is $I$ and inner loops in the $i$-th outer iteration is $J_i$ and ignoring the potential dimension reduction in inner loops, the upper bounded FLOP is given by 
\begin{align}
    \text{FLOP}_{\text{NNLS}} &\leq 4(I+1)\widehat{G}M + \frac{1}{2}\sum_{i=1}^{I}\Big( (1+J_i)\left(6Mi^2 + i^3 + 6Mi + 3i^2\right)\Big), \\ 
    &= 4(I+1)\widehat{G}M + 3M\sum_{i=1}^{I}i + \frac{3}{2}(2M+1)\sum_{i=1}^{I}i^2+ \frac{1}{2}\sum_{i=1}^{I}i^3 +f(J_i),
\end{align}
where $f(J_i) := \frac{1}{2}\sum_{i=1}^{I} J_i\left(6Mi^2 + i^3 + 6Mi + 3i^2\right)$ is the number of FLOP of inner loops. With a averaged number of inner iterations $\widetilde{J}$, the FLOP of NNLS can be approximated as
\begin{align}
    \text{FLOP}_{\widetilde{\text{NNLS}}} \approx  4(I+1)\widehat{G}M + \left(1+\widetilde{J}\right)\left(\frac{1}{8} I^4 + (M+\frac{3}{4})I^3+(3M+\frac{7}{8})I^2 +(2M+\frac{1}{4})I \right).
\end{align}

\subsection{Complexity of ML-EM algorithm}
We count the FLOP of calculating \eqref{eq:EM_mu_sigma} in each E-step and \eqref{eq:M-step} in each M-step. Note that the terms $\Dm^\herm\Ym$ and the gram $\Dm^\herm\Dm$ in \eqref{eq:EM_mu_sigma} only need to be calculated once and used in each iteration. Additionally, $\widehat{\Um}^{(\ell)}$ is a diagonal matrix and its inverse in \eqref{eq:EM_mu_sigma} can be easily obtained by taking the reciprocals of the main diagonal. Assuming the total number of iterations is $I_{\text{EM}}$, the required number of FLOP is given as  
\begin{equation}
    \text{FLOP}_{\text{EM}} = \widehat{G}M\left(N+\frac{1}{2}(\widehat{G}+1)\right) +  I_{\text{EM}} \left(\frac{1}{2}\widehat{G}^3 + \widehat{G}^2\left(N + \frac{3}{2}\right) + \widehat{G}N\right).
\end{equation}

\newpage
	\bibliographystyle{IEEEtran}
	\bibliography{references}

\begin{thebibliography}{10}
\providecommand{\url}[1]{#1}
\csname url@samestyle\endcsname
\providecommand{\newblock}{\relax}
\providecommand{\bibinfo}[2]{#2}
\providecommand{\BIBentrySTDinterwordspacing}{\spaceskip=0pt\relax}
\providecommand{\BIBentryALTinterwordstretchfactor}{4}
\providecommand{\BIBentryALTinterwordspacing}{\spaceskip=\fontdimen2\font plus
\BIBentryALTinterwordstretchfactor\fontdimen3\font minus
  \fontdimen4\font\relax}
\providecommand{\BIBforeignlanguage}[2]{{%
\expandafter\ifx\csname l@#1\endcsname\relax
\typeout{** WARNING: IEEEtran.bst: No hyphenation pattern has been}%
\typeout{** loaded for the language `#1'. Using the pattern for}%
\typeout{** the default language instead.}%
\else
\language=\csname l@#1\endcsname
\fi
#2}}
\providecommand{\BIBdecl}{\relax}
\BIBdecl

\bibitem{khalilsarai2020structured}
M.~B. Khalilsarai, T.~Yang, S.~Haghighatshoar, and G.~Caire, ``Structured
  channel covariance estimation from limited samples in massive {MIMO},'' in
  \emph{ICC 2020-2020 IEEE International Conference on Communications
  (ICC)}.\hskip 1em plus 0.5em minus 0.4em\relax IEEE, 2020, pp. 1--7.

\bibitem{boccardi2014five}
F.~Boccardi, R.~W. Heath, A.~Lozano, T.~L. Marzetta, and P.~Popovski, ``Five
  disruptive technology directions for 5{G},'' \emph{IEEE Communications
  Magazine}, vol.~52, no.~2, pp. 74--80, 2014.

\bibitem{Larsson-book}
T.~L. Marzetta, E.~G. Larsson, H.~Yang, and H.~Q. Ngo, \emph{Fundamentals of
  Massive MIMO}.\hskip 1em plus 0.5em minus 0.4em\relax Cambridge University
  Press, 2016.

\bibitem{chen2016pilot}
Z.~Chen and C.~Yang, ``Pilot decontamination in wideband massive {MIMO} systems
  by exploiting channel sparsity,'' \emph{IEEE Transactions on Wireless
  Communications}, vol.~15, no.~7, pp. 5087--5100, 2016.

\bibitem{yin2016robust}
H.~Yin, L.~Cottatellucci, D.~Gesbert, R.~R. Muller, and G.~He, ``Robust pilot
  decontamination based on joint angle and power domain discrimination,''
  \emph{IEEE Transactions on Signal Processing}, vol.~64, no.~11, pp.
  2990--3003, 2016.

\bibitem{haghighatshoar2017decontamination}
S.~Haghighatshoar and G.~Caire, ``Massive {MIMO} pilot decontamination and
  channel interpolation via wideband sparse channel estimation,'' \emph{IEEE
  Transactions on Wireless Communications}, vol.~16, no.~12, pp. 8316--8332,
  2017.

\bibitem{khalilsarai2018fdd}
M.~B. Khalilsarai, S.~Haghighatshoar, X.~Yi, and G.~Caire, ``Fdd massive mimo
  via ul/dl channel covariance extrapolation and active channel
  sparsification,'' \emph{IEEE Transactions on Wireless Communications},
  vol.~18, no.~1, pp. 121--135, 2018.

\bibitem{boroujerdi2018low}
M.~N. Boroujerdi, S.~Haghighatshoar, and G.~Caire, ``Low-complexity
  statistically robust precoder/detector computation for massive {MIMO}
  systems,'' \emph{IEEE Transactions on Wireless Communications}, vol.~17,
  no.~10, pp. 6516--6530, 2018.

\bibitem{haghighatshoar2018low}
S.~Haghighatshoar and G.~Caire, ``Low-complexity massive mimo subspace
  estimation and tracking from low-dimensional projections,'' \emph{IEEE
  Transactions on Signal Processing}, vol.~66, no.~7, pp. 1832--1844, 2018.

\bibitem{haghighatshoar2017massive}
------, ``Massive mimo pilot decontamination and channel interpolation via
  wideband sparse channel estimation,'' \emph{IEEE Transactions on Wireless
  Communications}, vol.~16, no.~12, pp. 8316--8332, 2017.

\bibitem{adhikary2013joint}
A.~Adhikary, J.~Nam, J.-Y. Ahn, and G.~Caire, ``Joint spatial division and
  multiplexing: the large-scale array regime,'' \emph{{IEEE Trans. on Inform.
  Theory}}, vol.~59, no.~10, pp. 6441--6463, 2013.

\bibitem{qiu2017downlink}
S.~Qiu, D.~Chen, D.~Qu, K.~Luo, and T.~Jiang, ``Downlink precoding with mixed
  statistical and imperfect instantaneous csi for massive mimo systems,''
  \emph{IEEE Transactions on Vehicular Technology}, vol.~67, no.~4, pp.
  3028--3041, 2017.

\bibitem{you2020spectral}
L.~You, J.~Xiong, A.~Zappone, W.~Wang, and X.~Gao, ``Spectral efficiency and
  energy efficiency tradeoff in massive mimo downlink transmission with
  statistical csit,'' \emph{IEEE Transactions on Signal Processing}, vol.~68,
  pp. 2645--2659, 2020.

\bibitem{liu2020statistical}
H.~Liu, X.~Yuan, and Y.~J. Zhang, ``Statistical beamforming for fdd downlink
  massive mimo via spatial information extraction and beam selection,''
  \emph{IEEE Transactions on Wireless Communications}, vol.~19, no.~7, pp.
  4617--4631, 2020.

\bibitem{raghavan2010sublinear}
V.~Raghavan and A.~M. Sayeed, ``Sublinear capacity scaling laws for sparse mimo
  channels,'' \emph{IEEE Transactions on Information Theory}, vol.~57, no.~1,
  pp. 345--364, 2010.

\bibitem{haghighatshoar2016massive}
S.~Haghighatshoar and G.~Caire, ``Massive mimo channel subspace estimation from
  low-dimensional projections,'' \emph{IEEE Transactions on Signal Processing},
  vol.~65, no.~2, pp. 303--318, 2016.

\bibitem{3gpp38901}
3GPP, ``Study on channel model for frequencies from 0.5 to 100 ghz (release
  15),'' 3rd Generation Partnership Project (3GPP), Tech. Rep. TR 38.901
  V15.0.0, 2018.

\bibitem{3gpp25996}
------, ``Spatial channel model for multiple input multiple output ({MIMO})
  simulations (release 16),'' 3rd Generation Partnership Project (3GPP), Tech.
  Rep. TR 25.996 V16.0.0, 2020.

\bibitem{jaeckel2014quadriga}
S.~Jaeckel, L.~Raschkowski, K.~B{\"o}rner, and L.~Thiele, ``{QuaDRiGa}: A 3-{D}
  multi-cell channel model with time evolution for enabling virtual field
  trials,'' \emph{IEEE Transactions on Antennas and Propagation}, vol.~62,
  no.~6, pp. 3242--3256, 2014.

\bibitem{va2016impact}
V.~Va, J.~Choi, and R.~W. Heath, ``The impact of beamwidth on temporal channel
  variation in vehicular channels and its implications,'' \emph{IEEE
  Transactions on Vehicular Technology}, vol.~66, no.~6, pp. 5014--5029, 2016.

\bibitem{mahler2015propagation}
K.~Mahler, W.~Keusgen, F.~Tufvesson, T.~Zemen, and G.~Caire, ``Propagation of
  multipath components at an urban intersection,'' in \emph{2015 IEEE 82nd
  Vehicular Technology Conference (VTC2015-Fall)}.\hskip 1em plus 0.5em minus
  0.4em\relax IEEE, 2015, pp. 1--5.

\bibitem{miretti2018fdd}
L.~Miretti, R.~L.~G. Cavalcante, and S.~Stanczak, ``Fdd massive mimo channel
  spatial covariance conversion using projection methods,'' in \emph{2018 IEEE
  International Conference on Acoustics, Speech and Signal Processing
  (ICASSP)}.\hskip 1em plus 0.5em minus 0.4em\relax IEEE, 2018, pp. 3609--3613.

\bibitem{khalilsarai2019uplink}
M.~B. Khalilsarai, Y.~Song, T.~Yang, S.~Haghighatshoar, and G.~Caire,
  ``Uplink-downlink channel covariance transformations and precoding design for
  {FDD} massive {MIMO},'' in \emph{2019 53rd Asilomar Conference on Signals,
  Systems, and Computers}.\hskip 1em plus 0.5em minus 0.4em\relax IEEE, 2019,
  pp. 199--206.

\bibitem{song2020deep}
Y.~Song, M.~B. Khalilsarai, S.~Haghighatshoar, and G.~Caire, ``Deep learning
  for geometrically-consistent angular power spread function estimation in
  massive {MIMO},'' in \emph{GLOBECOM 2020-2020 IEEE Global Communications
  Conference}.\hskip 1em plus 0.5em minus 0.4em\relax IEEE, 2020, pp. 1--6.

\bibitem{gonzalez2020fdd}
J.~P. Gonz{\'a}lez-Coma, P.~Su{\'a}rez-Casal, P.~M. Castro, and L.~Castedo,
  ``{FDD} channel estimation via covariance estimation in wideband massive
  {MIMO} systems,'' \emph{Sensors}, vol.~20, no.~3, p. 930, 2020.

\bibitem{decurninge2015channel}
A.~Decurninge, M.~Guillaud, and D.~T. Slock, ``Channel covariance estimation in
  massive {MIMO} frequency division duplex systems,'' in \emph{Globecom
  Workshops (GC Wkshps), 2015 IEEE}.\hskip 1em plus 0.5em minus 0.4em\relax
  IEEE, 2015, pp. 1--6.

\bibitem{marchenko1967distribution}
V.~A. Marchenko and L.~A. Pastur, ``Distribution of eigenvalues for some sets
  of random matrices,'' \emph{Matematicheskii Sbornik}, vol. 114, no.~4, pp.
  507--536, 1967.

\bibitem{hachem2005empirical}
W.~Hachem, P.~Loubaton, and J.~Najim, ``The empirical eigenvalue distribution
  of a {G}ram matrix: From independence to stationarity,'' \emph{arXiv preprint
  math/0502535}, 2005.

\bibitem{couillet2011random}
R.~Couillet and M.~Debbah, \emph{Random matrix methods for wireless
  communications}.\hskip 1em plus 0.5em minus 0.4em\relax Cambridge University
  Press, 2011.

\bibitem{pourahmadi2013high}
M.~Pourahmadi, \emph{High-dimensional covariance estimation: with
  high-dimensional data}.\hskip 1em plus 0.5em minus 0.4em\relax John Wiley \&
  Sons, 2013, vol. 882.

\bibitem{ravikumar2011high}
P.~Ravikumar, M.~J. Wainwright, G.~Raskutti, B.~Yu \emph{et~al.},
  ``High-dimensional covariance estimation by minimizing $\ell$1-penalized
  log-determinant divergence,'' \emph{Electronic Journal of Statistics},
  vol.~5, pp. 935--980, 2011.

\bibitem{chen2011robust}
Y.~Chen, A.~Wiesel, and A.~O. Hero, ``Robust shrinkage estimation of
  high-dimensional covariance matrices,'' \emph{IEEE Transactions on Signal
  Processing}, vol.~59, no.~9, pp. 4097--4107, 2011.

\bibitem{friedman2008sparse}
J.~Friedman, T.~Hastie, and R.~Tibshirani, ``Sparse inverse covariance
  estimation with the graphical lasso,'' \emph{Biostatistics}, vol.~9, no.~3,
  pp. 432--441, 2008.

\bibitem{stoica2011spice}
P.~Stoica, P.~Babu, and J.~Li, ``{SPICE}: A sparse covariance-based estimation
  method for array processing,'' \emph{IEEE Transactions on Signal Processing},
  vol.~59, no.~2, pp. 629--638, 2011.

\bibitem{xie2018channel}
H.~Xie, F.~Gao, S.~Jin, J.~Fang, and Y.-C. Liang, ``Channel estimation for
  tdd/fdd massive mimo systems with channel covariance computing,'' \emph{IEEE
  Transactions on Wireless Communications}, vol.~17, no.~6, pp. 4206--4218,
  2018.

\bibitem{park2018spatial}
S.~Park and R.~W. Heath, ``Spatial channel covariance estimation for the hybrid
  {MIMO} architecture: A compressive sensing-based approach,'' \emph{IEEE
  Transactions on Wireless Communications}, vol.~17, no.~12, pp. 8047--8062,
  2018.

\bibitem{stoica2005spectral}
P.~Stoica, R.~L. Moses \emph{et~al.}, \emph{Spectral analysis of
  signals}.\hskip 1em plus 0.5em minus 0.4em\relax Pearson Prentice Hall Upper
  Saddle River, NJ, 2005.

\bibitem{chen2020new}
P.~Chen, Z.~Chen, Z.~Cao, and X.~Wang, ``A new atomic norm for doa estimation
  with gain-phase errors,'' \emph{IEEE Transactions on Signal Processing},
  vol.~68, pp. 4293--4306, 2020.

\bibitem{sayeed2002deconstructing}
A.~M. Sayeed, ``Deconstructing multiantenna fading channels,'' \emph{IEEE
  Transactions on Signal processing}, vol.~50, no.~10, pp. 2563--2579, 2002.

\bibitem{de2020non}
E.~De~Carvalho, A.~Ali, A.~Amiri, M.~Angjelichinoski, and R.~W. Heath,
  ``Non-stationarities in extra-large-scale massive mimo,'' \emph{IEEE Wireless
  Communications}, vol.~27, no.~4, pp. 74--80, 2020.

\bibitem{wainwright2019high}
M.~J. Wainwright, \emph{High-dimensional statistics: A non-asymptotic
  viewpoint}.\hskip 1em plus 0.5em minus 0.4em\relax Cambridge University
  Press, 2019, vol.~48.

\bibitem{hanssens2018modeling}
B.~Hanssens, K.~Saito, E.~Tanghe, L.~Martens, W.~Joseph, and J.-I. Takada,
  ``Modeling the power angular profile of dense multipath components using
  multiple clusters,'' \emph{IEEE Access}, vol.~6, pp. 56\,084--56\,098, 2018.

\bibitem{gubner2006probability}
J.~A. Gubner, \emph{Probability and random processes for electrical and
  computer engineers}.\hskip 1em plus 0.5em minus 0.4em\relax Cambridge
  University Press, 2006.

\bibitem{ali2016estimating}
A.~Ali, N.~Gonz{\'a}lez-Prelcic, and R.~W. Heath, ``Estimating millimeter wave
  channels using out-of-band measurements,'' in \emph{2016 Information Theory
  and Applications Workshop (ITA)}.\hskip 1em plus 0.5em minus 0.4em\relax
  IEEE, 2016, pp. 1--6.

\bibitem{schwarz1978estimating}
G.~Schwarz, ``Estimating the dimension of a model,'' \emph{The annals of
  statistics}, pp. 461--464, 1978.

\bibitem{schmidt1986multiple}
R.~O. Schmidt, ``Multiple emitter location and signal parameter estimation,''
  \emph{Antennas and Propagation, IEEE Transactions on}, vol.~34, no.~3, pp.
  276--280, 1986.

\bibitem{stoica1989music}
P.~Stoica and A.~Nehorai, ``{MUSIC}, maximum likelihood, and {C}ramer-{R}ao
  bound,'' \emph{IEEE Transactions on Acoustics, speech, and signal
  processing}, vol.~37, no.~5, pp. 720--741, 1989.

\bibitem{wax1985detection}
M.~Wax and T.~Kailath, ``Detection of signals by information theoretic
  criteria,'' \emph{IEEE Transactions on acoustics, speech, and signal
  processing}, vol.~33, no.~2, pp. 387--392, 1985.

\bibitem{barron1998minimum}
A.~Barron, J.~Rissanen, and B.~Yu, ``The minimum description length principle
  in coding and modeling,'' \emph{IEEE transactions on information theory},
  vol.~44, no.~6, pp. 2743--2760, 1998.

\bibitem{xu1995analysis}
W.~Xu and M.~Kaveh, ``Analysis of the performance and sensitivity of
  eigendecomposition-based detectors,'' \emph{IEEE Transactions on Signal
  Processing}, vol.~43, no.~6, pp. 1413--1426, 1995.

\bibitem{liavas2001behavior}
A.~P. Liavas and P.~A. Regalia, ``On the behavior of information theoretic
  criteria for model order selection,'' \emph{IEEE Transactions on Signal
  Processing}, vol.~49, no.~8, pp. 1689--1695, 2001.

\bibitem{chen2010nonnegativit}
D.~Chen and R.~J. Plemmons, ``Nonnegativity constraints in numerical
  analysis,'' in \emph{The birth of numerical analysis}.\hskip 1em plus 0.5em
  minus 0.4em\relax World Scientific, 2010, pp. 109--139.

\bibitem{lawson1974solving}
C.~L. Lawson and R.~Hanson, ``Solving least squares problems prentice-hall,''
  \emph{Englewood Cliffs, NJ}, 1974.

\bibitem{hunter2004tutorial}
D.~R. Hunter and K.~Lange, ``A tutorial on {MM} algorithms,'' \emph{The
  American Statistician}, vol.~58, no.~1, pp. 30--37, 2004.

\bibitem{sun2016majorization}
Y.~Sun, P.~Babu, and D.~P. Palomar, ``Majorization-minimization algorithms in
  signal processing, communications, and machine learning,'' \emph{IEEE
  Transactions on Signal Processing}, vol.~65, no.~3, pp. 794--816, 2016.

\bibitem{dempster1977maximum}
A.~P. Dempster, N.~M. Laird, and D.~B. Rubin, ``Maximum likelihood from
  incomplete data via the {EM} algorithm,'' \emph{Journal of the Royal
  Statistical Society: Series B (Methodological)}, vol.~39, no.~1, pp. 1--22,
  1977.

\bibitem{journee2010generalized}
M.~Journ{\'e}e, Y.~Nesterov, P.~Richt{\'a}rik, and R.~Sepulchre, ``Generalized
  power method for sparse principal component analysis.'' \emph{Journal of
  Machine Learning Research}, vol.~11, no.~2, 2010.

\bibitem{yuille2002concave}
A.~L. Yuille and A.~Rangarajan, ``The concave-convex procedure ({CCCP}),'' in
  \emph{Advances in neural information processing systems}, 2002, pp.
  1033--1040.

\bibitem{wipf2004sparse}
D.~P. Wipf and B.~D. Rao, ``{Sparse Bayesian learning for basis selection},''
  \emph{IEEE Transactions on Signal processing}, vol.~52, no.~8, pp.
  2153--2164, 2004.

\bibitem{bauschke1996projection}
H.~H. Bauschke and J.~M. Borwein, ``On projection algorithms for solving convex
  feasibility problems,'' \emph{SIAM review}, vol.~38, no.~3, pp. 367--426,
  1996.

\bibitem{nam2014joint}
J.~Nam, A.~Adhikary, J.-Y. Ahn, and G.~Caire, ``Joint spatial division and
  multiplexing: Opportunistic beamforming, user grouping and simplified
  downlink scheduling,'' \emph{{IEEE J. of Sel. Topics in Sig. Proc. (JSTSP)}},
  vol.~8, no.~5, pp. 876--890, 2014.

\bibitem{najim2016statistical}
O.~Najim, P.~Vallet, G.~Ferr{\'e}, and X.~Mestre, ``On the statistical
  performance of music for distributed sources,'' in \emph{2016 IEEE
  Statistical Signal Processing Workshop (SSP)}.\hskip 1em plus 0.5em minus
  0.4em\relax IEEE, 2016, pp. 1--5.

\bibitem{hunger2005floating}
R.~Hunger, \emph{Floating point operations in matrix-vector calculus}.\hskip
  1em plus 0.5em minus 0.4em\relax Munich University of Technology, Inst. for
  Circuit Theory and Signal Processing, 2005.

\bibitem{bertsekas2015convex}
D.~P. Bertsekas and A.~Scientific, \emph{Convex optimization algorithms}.\hskip
  1em plus 0.5em minus 0.4em\relax Athena Scientific Belmont, 2015.

\bibitem{franc2005sequential}
V.~Franc, V.~Hlav{\'a}{\v{c}}, and M.~Navara, ``Sequential coordinate-wise
  algorithm for the non-negative least squares problem,'' in
  \emph{International Conference on Computer Analysis of Images and
  Patterns}.\hskip 1em plus 0.5em minus 0.4em\relax Springer, 2005, pp.
  407--414.

\end{thebibliography}

\end{document}